\newcommand{\abs}[1]{\lvert #1 \rvert}
\newcommand{\card}[1]{\abs{#1}}
\newcommand{\normsq}[1]{{\|{#1}\|}^2}
\newcommand{\ind}[1]{{\mathbb I}_{\{#1\}}}
\newcommand{\smfrac}[2]{{\textstyle{\frac{#1}{#2}}}}
\newcommand{\mU}{{\mathcal{U}}}
\newcommand{\mM}{{\mathcal{M}}}
\newcommand{\mP}{{\mathcal{P}}}
\newcommand{\mI}{{\mathcal{I}}}
\newcommand{\bp}{{\mathbf{p}}}
\newcommand{\bz}{{\mathbf{z}}}
\newcommand{\mD}{{\mathcal{D}}}
\newcommand{\mL}{{\mathcal{L}}}
\newcommand{\bPi}{{\boldsymbol{\Pi}}}
\newcommand{\changeclr}{\color{black}}
\newcommand{\normalclr}{\color{black}}
\newtheorem{proposition}{Proposition}
\newtheorem{remark}{Remark}
\begin{document}

%\pagestyle{fancy}
%\lhead{\textit{Alcatel-Lucent Confidential}}
%\rhead{} %empty

\title{Algorithms for Enhanced Inter Cell Interference Coordination (eICIC)
in LTE HetNets}

\author{Supratim Deb, Pantelis Monogioudis, Jerzy Miernik, James P. Seymour

\thanks{Manuscript received July~6, 2012; revised December~12, 2012; accepted
January~21, 2013.}

\thanks{S. Deb, P. Monogioudis, and J. Miernik are with
Wireless Chief Technology Office, Alcatel-Lucent USA.
e-mail: {\it first\_name.last\_name@alcatel-lucent.com}. The work was
done when J. P. Seymour was with Alcatel-Lucent.}
}

\maketitle

\begin{abstract}

The success of LTE Heterogeneous Networks (HetNets) with macro cells and pico cells
critically depends on efficient spectrum sharing between high-power macros and
low-power picos. Two important challenges in this context are, {(i)} determining the
amount of radio resources that macro cells should {\em offer} to pico cells, and {(ii)}
determining the association rules that decide which UEs should associate with
picos. In this paper, we develop a novel algorithm to solve these two coupled
problems in a joint manner. Our algorithm has provable guarantee, and furthermore, 
it accounts for network topology, traffic load, and macro-pico interference map. Our solution is
standard compliant and can be implemented using the notion of Almost Blank Subframes
(ABS) and Cell Selection Bias (CSB) proposed by LTE standards. We also show
extensive evaluations using RF plan from a real network and discuss SON based eICIC
implementation.

\end{abstract}

\begin{IEEEkeywords}
4G LTE, Heterogeneous Cellular Systems, eICIC, Self-Optimized Networking (SON)
\end{IEEEkeywords}

\vspace{-0.15in}
\section{Introduction}

% Capacity increase and small cells

Wireless data traffic has seen prolific growth in recent years due to new generation
of wireless gadgets (e.g., smartphones, tablets, machine-to-machine communications)
and also due to fundamental shift in traffic pattern from being data-centric to
video-centric. Addressing this rapid growth in wireless data calls for making
available radio spectrum as spectrally-efficient as possible. A key centerpiece is
making the radio spectrum efficient is LTE heterogeneous networks (LTE HetNet) or
{\em small cell} networks~\cite{lteA-overview}. In a HetNet architecture, in addition
to usual macro cells, wireless access is also provided through low-powered and
low-cost radio access nodes that have coverage radius around
10~m\--300~m\cite{eicic-intro}. Small cells in LTE networks is a general term used to
refer to Femto cells and Pico cells. \changeclr Femto cells are typically for indoor
use with a coverage radius of few tens of meters and its use is restricted to a
handful of users in {\em closed subscriber group}.  Pico cells have a coverage of
couple of hundreds of meters and pico cells are {\em open subscriber group} cells
with access permission to all subscribers of the operator. Picos are typically
deployed near malls, offices, business localities with dense mobile usage etc. Picos
are mostly deployed outdoor but there could be indoor deployments in large
establishments etc. \normalclr However, in LTE, since pico cells typically share the frequency
band as macro cells, the performance of a low-power pico access node could be
severely impaired by interference from a high power macro access node. Addressing
this interference management riddle is key to realize the true potential of a LTE
HetNet deployment and is the goal of this work. This work focuses on resource sharing
between macro cells and pico cells.  Note that macros and picos are both deployed in
a  planned manner by cellular operators. 

% eICIC and Rel 10

\begin{figure}[t]
\begin{center}
\includegraphics[height=1.1in,width=3.0in]{./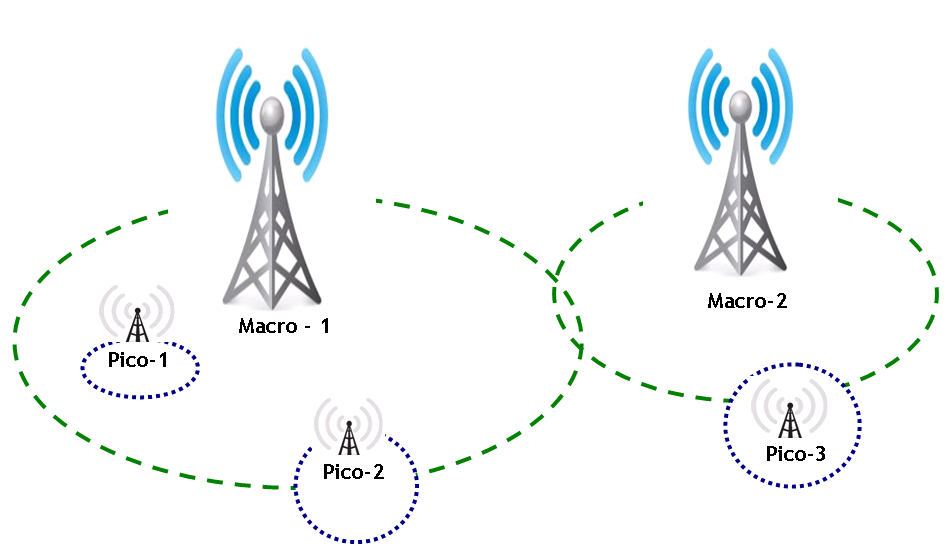}
\caption{\label{fig:hetnet}
A typical LTE HetNet architecture with Macro and Pico access nodes. Pico\--1
is used for throughput enhancement in a possible traffic hotspot location, Pico\--2
and Pico\--3 are used for improving edge throughput.}
\end{center}
\vspace{-0.20in}
\end{figure}

A typical HetNet with pico and macro~access nodes is shown in
Figure~\ref{fig:hetnet}.  The high-power macro network nodes are deployed for blanket
coverage of urban, suburban, or rural areas; whereas, the pico nodes with small RF
coverage areas aim to complement the macro network nodes for filling coverage holes
or enhancing throughput. 
There are two factors that could handicap the net capacity
of a pico access node in the downlink. Firstly, the downlink pico transmissions to its associated UEs
could be severely interfered by high power macro transmissions~\cite{survey-eicic}.
For e.g., in Figure~\ref{fig:hetnet}, downlink transmissions to UEs associated with
Pico\--1 could easily be interfered by downlink transmissions of Macro\--1.
Secondly, UEs, who are close to pico and could benefit from associating with a
pico access node, could actually end up associating with the macro access node due
to higher received signal strength from the high power macro access node \footnote{In
LTE networks, UEs associate typically with the cell with highest {\em Received
Signal Reference Power} (RSRP). RSRP is a measure of the received signal strength of
a cell at a UE and it is measured based on the strength of certain reference signals
that cells broadcast.}.  For e.g., UEs not too close to Pico\--3 but still
within the coverage area of Pico\--3 could end up associating with Macro\--2 because
of higher received signal strength from Macro\--2.  Indeed, this could leave the
pico underutilized and thus defeating the purpose of deploying that pico. 
\changeclr
Note that, it is the downlink interference at the pico UEs that needs additional protection from
the macros; the uplink interference at the picos can be mitigated using the same power
control principle in a macro only LTE network~\cite{ltefpc}.
\normalclr
Thus, for a pico cell based HetNet deployment to realize the promised theoretical gains, there
are two important questions that need to be answered:

\begin{enumerate}

\item  How should downlink radio resources be shared so that pico UEs are
guaranteed a {\em fair} share of throughput? Clearly, one needs to ensure that the
pico transmissions are not badly hit by interference from macros.

\item How to decide which UEs get associated with picos? 
Clearly, association based on highest signal
strength is inadequate to address this challenge.

\end{enumerate}

\changeclr

This paper provides answers to these two coupled questions.  \changeclr Realizing the
need to protect {\em downlink} pico transmissions by mitigating interference from
neighboring macro cells, 3GPP has proposed the notion of {\em enhanced inter cell
interference coordination} (eICIC) that provide means for macro and pico access nodes
to time-share the radio resources for downlink transmissions. \normalclr In simple
terms eICIC standards propose two techniques.  Firstly, each macro remains silent for
certain periods, termed Almost Blank Subframes (ABS periods), over which pico can
transmit at reduced interference. Secondly, the received signal strength based UE
association in LTE is allowed to be biased towards the pico by a suitable margin. The
details of how to set these ABS periods and how much to bias the association in favor
of picos are left unspecified. In this paper, we answer these questions. We design our
algorithms to meet the following goals: network-wide high performance, adaptability
to network settings like propagation map and network topology etc., and scalability.

%%%%%%%%%%%%%%%%%%%%%%%%%%%%%%%%%%%%%%%%%%%%%%%%%%%%%%%%%%%%%%%%%%%%%%%%%%%%
\iffalse
{\bf Design Challenges and Our Goal}
For a heterogeneous network with macro and pico cells to operate close to
optimality, the choice of ABS and UE-association rules (possibly through CSB) are
critical. Clearly, a high ABS would mean over-provisioning of resources dedicated to
pico, thus resulting in a poor network performance. Similarly, a low bias value for a pico
may result in too few UEs associating with the pico thus resulting in
underutilization of deployed picos. It must be clear from the preceding discussion
that the choice of ABS and UE-associations (to macro or a pico) are coupled. Our goal is to design a scheme for
selecting ABS and CSB such that the following design principles are satisfied:

\begin{enumerate}

\item {\em Network-wide high performance:} The choice of ABS and UE-associations should lead to a high
gain in overall performance of the entire network. The gain for UEs associated with a pico
should {\em not} come at a big expense of macro UEs. 

\item {\em Dynamic:} The choice of ABS and CSB should account for changes in parameters
like SINR map, traffic hot-spot locations, usage profiles, location of deployed pico
eNBs etc. 

\item {\em Scalable:} This means that the designed scheme should 
be able to quickly compute ABS and CSB  with thousands of macros picos. 

\end{enumerate}
\fi
%%%%%%%%%%%%%%%%%%%%%%%%%%%%%%%%%%%%%%%%%%%%%%%%%%%%%%%%%%%%%%%%%%%%%%%%%%%%

\vspace{-0.15in}
\subsection{Our Contributions}

In this work, we make the following contributions:

\begin{enumerate}

\item {\em Framework for network dependent eICIC:} To the best of our knowledge, ours
is the first work to provide a formal framework for optimizing  Almost Blank
Subframes (ABS) and UE-association in every cell by accounting for cell specific
UE (load) locations, propagation map of each cell, macr-pico interference maps,
and network topology.  We also establish that computing the optimal solution with
respect to maximizing a network utility is computationally hard.

\item {\em Efficient eICIC Algorithms:} We next provide an efficient algorithm to
compute ABS and UE-associations (and corresponding CSB) in an LTE HetNet. Our
algorithm is provably within a constant factor of the optimal and scales linearly
with the number of cells.  Furthermore, our algorithm is amenable to distributed
implementation.

\changeclr
\item {\em Evaluation using Real RF Plan:}  We perform extensive evaluation of our
algorithm on a Radio-Frequency map from a real LTE deployment in New York City and demonstrate the
gains. The results show that, our algorithm performs within 90\% of the optimal for
realistic deployment scenarios, and, $5^{th}$ percentile of UE throughput in the
pico coverage area can improve up to more than 50\% compared to no eICIC; the
improvements can be $2\times$ for lower throughput percentiles.

\item {\em Practical Feasibility with SON:} Finally, we discuss the challenges of
implementing eICIC within {\em Self-Optimizing} (SON) framework and describe a
prototype along with the associated challenges.
\normalclr
\end{enumerate}

The rest of the paper is organized as follows. Section~\ref{sec:eicic} provides a
background on eICIC and describes
some important related work. In Section~\ref{sec:model}, we describe our network
model. Section~\ref{sec:son} states the problem and formally derives the
computational limits of the problem. Our main algorithm for jointly optimizing ABS
parameters for each cell and macro/pico association for each UE is provided in
Section~\ref{sec:algo1}\--\ref{sec:algo3}. Section~\ref{sec:csbabspattern} describes how a
given choice of UE association can be translated into cell selection bias parameters
and also how ABS numbers can be converted into ABS patterns.
Section~\ref{sec:eval} presents evaluations using RF plan from a real LTE deployment.
Finally, Section~\ref{sec:son} discusses SON based eICIC implementation along
with a prototype.

\vspace{-0.15in}
\section{Background: eICIC and Related Work}
\label{sec:eicic}

\subsection{Enhanced Inter Cell Interference Coordination (eICIC)}

The eICIC proposal in LTE standards serves two important purposes:
allow for time-sharing of spectrum resources (for downlink transmissions)
between macros and picos so as to
mitigate interference to pico in the downlink, and, allow for flexibility
in UE association so that picos are neither underutilized nor overloaded.
In eICIC, a macro eNodeB can inject silence periods in its
transmission schedule from time to time, so that interfering pico eNodeBs can use
those silence periods for downlink transmissions. Furthermore, to ensure that
sufficient number of UEs get associated with a pico, the eICIC mechanism allow
UEs to {\em bias} its association to a pico. Before we discuss these mechanisms in
more details, we provide a very brief introduction to format of downlink transmissions
in LTE.

{\em Downlink transmission format:} In LTE, transmissions are scheduled once every
subframe of duration 1~ms; 10 such subframes consist of a frame of length 10~ms.
Each subframe is further divided into 2~slots of duration
0.5~ms each. Each slot consists of 7~OFDMA symbols. While we do not need any further
details for our discussion, the interested reader can refer to~\cite{ltebook} for extensive
details of LTE downlink transmission format. We now describe two important features of
eICIC.

\begin{figure}[t]
\begin{center}
\includegraphics[height=1.7in,width=3.0in]{./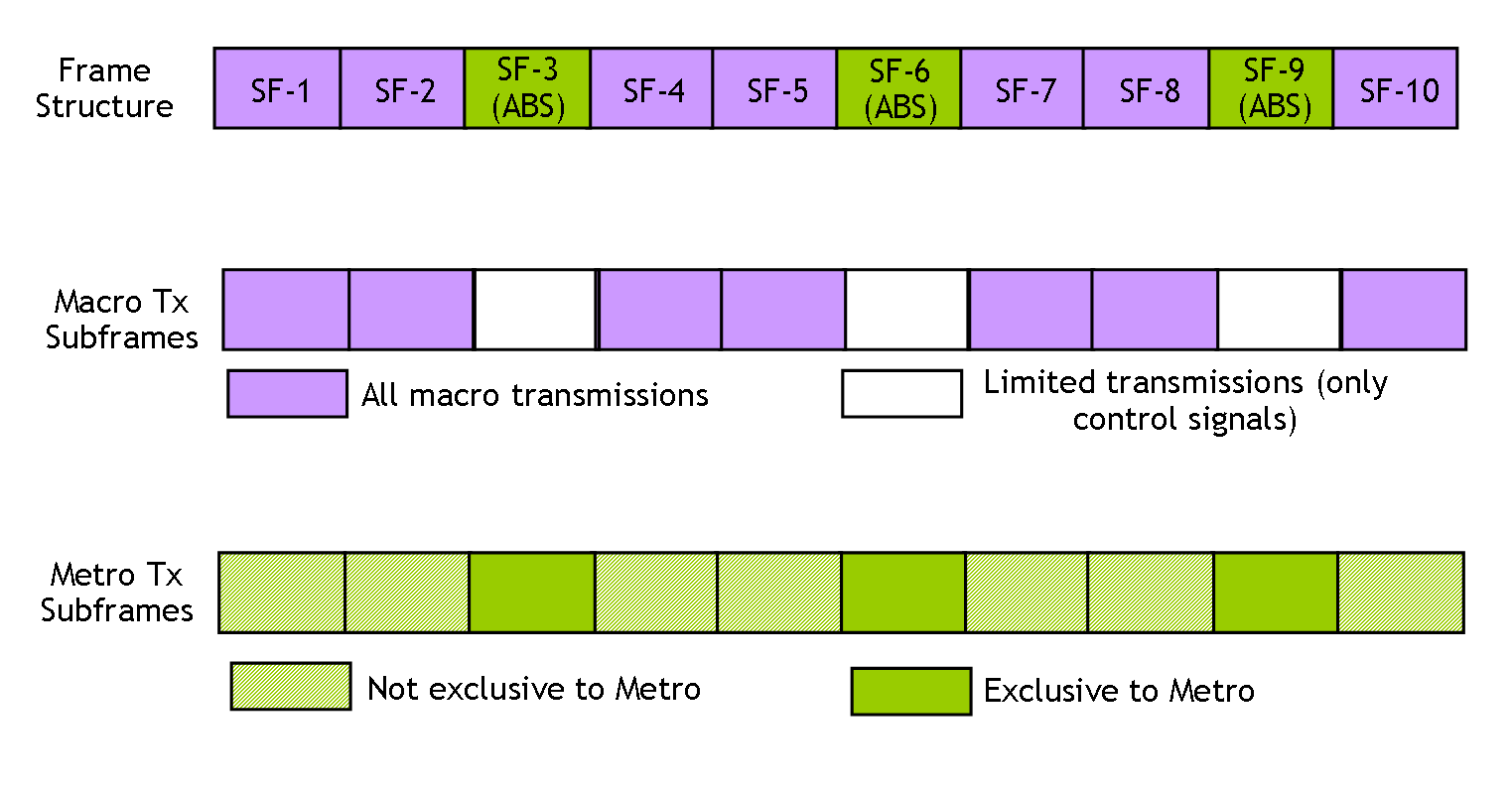}
\caption{\label{fig:abseg}
An illustration of how an LTE frame can consist of ABS subframes. A pico can
transmit over ABS subframes with very little interference from macro, and it can
also transmit over any other non-ABS subframe when it receives high interference from
macro.}
\end{center}
\vspace{-0.20in}
\end{figure}
 
{\bf Almost Blank Subframes (ABS):} In order to assist pico downlink transmissions,
the macro eNodeBs can mute all downlink transmissions to its UEs in certain
subframes termed {\em almost blank subframes} (ABS). These subframes are called
``almost blank" because a macro can still transmit some broadcast signals over
these subframes. Since these broadcast signals only occupy a small
fraction of the OFDMA sub-carriers, the overall interference a macro causes to a
pico is much less during these ABS periods. Thus, the pico can transmit to its
UEs at a much higher data rate during ABS periods. Note that, a pico is also
allowed to transmit to its UEs during non-ABS periods. This could provide good
enough performance to UEs very close to the pico.  An example of ABS schedule is
shown in Figure~\ref{fig:abseg}.

{\bf Flexible User Association and Cell selection bias (CSB):} Typically in cellular
networks, when a UE device (UE) has to select a suitable cell for association, it
chooses the one with maximum received signal strength.  However, if the same strategy
is extended to HetNet deployments with both macro and pico cells, this could lead to
underutilization of the pico eNodeB's. This is because, picos transmit at very low
power and thus, unless a UE is very close to the pico, signal strength from the
macro is likely to be larger for the UE. To overcome this, LTE standards have
proposed a concept called cell selection bias which works in the following manner.
Suppose the cell selection bias of cell-$i$ is $\alpha_i$.  Denote by $P_i$ as the
reference-signal received power (in dBm) from cell-$i$ as measured by a UE.  Here a
cell could be a pico-cell or a macro-cell. Then, the UE associates with cell-$k$
such that \[ k = \arg\max_i (P_i +\alpha_i) .\] Thus, by assigning larger (smaller)
bias to picos compared to macros, once can ensure that the picos are not
underutilized (over-utilized). The bias values are broadcast by the cells to assist
UEs make the right association decision.

\changeclr

\vspace{-0.15in}
\subsection{Connection with Other Interference Mitigation Techniques}
\label{sec:otherintmit}

It is instructive to discuss the connection of eICIC with two other interference mitigation
techniques found in LTE specifications, namely, the frequency domain Inter-cell Interference
Coordination (ICIC) and Coordinated Multi-Point (CoMP).

{\bf Connection with ICIC:} ICIC pre-existed eICIC in terms of both standards
development and initial deployments.  In ICIC based schemes, cell resources are
divided into frequency bands and transmit power profiles to reduce inter-cell
interference~\cite{twcRahmanY10,ShirakabeMM11,sim-icic}. The obvious difference is
that ICIC works in the power spectral density (PSD) domain as opposed to the time
domain of eICIC. ICIC has certain limitations compared to eICIC. In ICIC based
schemes, the different power profiles can either span different sub-bands (thus
leading to a non-uniform frequency re-use factor) or it can span multiple carriers.
The disadvantage of sub-band ICIC is that standards did not include the capability of
varying the {\em reference signal} power according to the sub-band, giving rise to data demodulation
issues for QAM-modulated symbols. Another challenge in deploying sub-band based ICIC
is to find sufficient number of PSD patterns in a dense deployment with macros and
picos.  The carrier based ICIC, on the other hand, introduces significant number of
inter-carrier handovers. The requirement to define multiple carriers also increases
the cost of small cells.  Due to these reasons,  eICIC has attracted much attention
of operators and standard bodies especially for mitigating interference to picos in
HetNets. Having said that, {\em ICIC based macros and eICIC are not either-or
propositions and eICIC benefits can come on the top of existing ICIC based macro
deployments.  Our model and framework easily incorporates existing ICIC based macros;
see Remark~\ref{rem:icic} in Section~\ref{sec:prob} for details.}

{\bf Connection with CoMP:} Coordinated Multipoint Access (CoMP) is also another
edge-rate improving technology for downlink and uplink both. In downlink, CoMP can be
considered an extension of multiuser MIMO (MU-MIMO), that  achieves interference
mitigation by transmitting simultaneously from multiple cells with properly chosen
antenna weights so as to achieve some optimal physical-layer oriented metric like
zero-forcing or joint Minimum Mean Square Error (MMSE).  In spite of theoretical
performance benefits, CoMP faces severe practical challenges due to high backhaul
requirements between base stations, extremely tight time-phase-frequency
synchronization across the set of collaborating cells, and requirement of pico cells
to be equipped with antenna arrays leading to cost increase of supposedly low-cost
picos. Evidently, it will take significant investment and time for CoMP to be widely
deployed in LTE networks that are still in their early days. eICIC on the other hand
is much simpler to deploy and is likely to be adopted much sooner. To this end, our
work optimizes eICIC which does not account for CoMP.  Nevertheless, jointly
optimizing a modified version of eICIC to account for CoMP is a technically
challenging problem and is left as a future work.  See Remark~\ref{rem:comp} in
Section~\ref{sec:prob} for further details.

\normalclr

\vspace{-0.15in}
\subsection{Related Work}
\label{sec:related}

The eICIC proposal is relatively new for LTE Heterogeneous networks.
In~\cite{eicic-intro}, the authors present a very good introduction to the concept of
eICIC in LTE HetNets. In~\cite{survey-eicic}, the authors provide an excellent survey
on eICIC and the motivation behind eICIC proposal in LTE standards. In a
recent work~\cite{sim-eicic}, the authors present simulation studies to understand
the dependence between network performance and eICIC parameters. However, the authors
primarily consider uniform eICIC parameter in all the cells; clearly, the right
choice of eICIC parameters should vary across cells and account for propagation map,
cell-load etc.  Also, the authors do not present a framework to optimize the eICIC
parameters.

In the previous subsection, we have talked about another interference
mitigation technique called ICIC that has been an area of active research in
the recent past.  In~\cite{twcRahmanY10,ShirakabeMM11,sim-icic}, the impact of
ICIC has been studied for LTE and LTE HetNets. The concept of soft-frequency
reuse has been formally studied in~\cite{ffrStolyar08}, where, the authors
optimize downlink transmit power profiles in different frequency bands. The
work closest to ours in principle is~\cite{MadanBSBKJ10} that considers the
problem of UE association and ICIC in a joint manner.

\normalclr

\vspace{-0.15in}
\section{System Model}
\label{sec:model}

\subsection{Terminologies}

In addition to the notion of subframes described in Section~\ref{sec:eicic}, we will use the 
following terminologies in this paper.

{\em UE (user equipment):} UE refers to the mobile device.

{\em eNodeB (eNB):}  The eNB\footnote{eNB is equivalent to base station in
traditional cellular voice networks but it has more functionalities.} is 
the network element that interfaces with the UE and it performs radio resource
management, admission control, scheduling, QoS enforcement, cell information
broadcast etc. It hosts critical protocol layers like PHY, MAC, and Radio Link
Control (RLC) etc. 

{\em Macro cell:} In LTE heterogeneous networks (HetNet), a macro cell 
has a base station transmitter with high transmission power (typically 20~W\--40~W), sectorized
directional antennas, high elevation, and thus ensuring a cell coverage
radius typically around 0.5~km\--2~km.

{\em Pico cell:} As opposed to a macro cell, a pico transmitter is characterized
my much lower transmission power (typically 2~W\--5~W), omnidirectional antennas, low
antenna height, low cost, and has a cell coverage radius of around 100\--300~m. Pico
cells are underlayed on the macro-cellular network to fill coverage holes and to enhance capacity in
traffic hotspot locations. 

{\em Reference Signal Received Power (RSRP):} Every UE in LTE makes
certain measurements of received signal strength of all nearby cell transmitters. RSRP is the
average received power of all downlink reference signals across the entire bandwidth
as measured by a UE. RSRP is taken as a measure of the received signal strength of a
cell transmitter at a UE.

%{\em Proportional-fair scheduling:}

\vspace{-0.15in}
\changeclr
\subsection{Network Model and Interference graph}

\begin{table}[t] \footnotesize
    \caption{List of Parameters and Key Optimization Variables}
    \label{tab:params}
\vspace{-0.2in}
\begin{center}
\begin{tabular}{| c | c |}
    \hline
   {\bf Notation} & {\bf Description}  \\
\hline
\hline
    $\mU,\ u,\ N$     & Set of UEs, index for a typical \\
    $(u\in\mU)$  & UE, number of UEs, respectively \\
\hline
    $\mM,\ m,\ M$     & Set of macros, index for a typical \\
    $(m\in\mM)$  & macro, number of macros, respectively \\
\hline
    $\mP,\ p, P$     & Set of picos, index for a typical\\
    $(p\in\mP)$   & pico, number of picos, respectively \\
\hline
    $m_u $ & The macro that is {\em best} for UE-$u$ \\
\hline
    $r_u^{macro} $ & Data-rate achievable by UE-$u$ from $m_u$ \\
         & (in bits/sub-frame) \\
\hline
    $p_u $ & The pico that is {\em best} for UE-$u$ \\
\hline
    $r_u^{pico,ABS} $ & Data-rate achievable by UE-$u$ from $p_u$ when all \\
                       & interfering macros are muted \\
         & (in bits/sub-frame) \\
\hline
    $r_u^{pico} $ & Data-rate achievable by UE-$u$ from $p_u$ when \\
                     & all/some interfering macros are transmitting \\
         & (in bits/sub-frame) \\
\hline
    ${\mI}_p $ & Set of macro eNB's that interfere \\
                     & with pico $p$ \\
\hline
    ${\mU}_m $ & Set of UEs for whom macro-$m$ is the \\ & best macro eNB\\
\hline
    ${\mU}_p $ & Set of UEs for whom pico-$p$ is the \\ & best pico eNB\\
\hline
\changeclr
$A_p$ & Variable for ABS subframes used/received by pico-$p$  \\
\hline
$N_m$ & Variable non-ABS subframes used by macrio-$m$ \\
\hline
$x_u$ & Variable denoting UE-$u$'s air-time from macro\\
\hline
$y_u^A,\ y_u^{nA}$ & Variables denoting UE-$u$'s air-time from pico \\
                   & over ABS and non-ABS subframes respectively \\
\hline
$R_u$ & Variable denote UE-$u$'s average throughput \\
\hline
{\bf z,\ p} & Vector of all primal and dual variables, respectively \\
\hline
\end{tabular}
\vspace{-0.35in}
\end{center}
\end{table}

Since the eICIC proposal by LTE standard aims to protect downlink
pico transmissions\footnote{The uplink problem (via power
control~\cite{ltefpc}) in presence of picos is not different from macro only
network because UE capabilities from a transmit power point of view remain
same.} and our goal is to develop solutions for optimal eICIC setting, we only
consider downlink transmissions in this work.  \normalclr

{\em Network Topology:} Our system model consists of a network of
macro and pico (also called pico in this paper) eNBs. $\mM$ denotes the set of macros and $\mP$
denotes the set of picos. We also use $m$ and $p$ to denote a typical macro and a
typical pico respectively. 

\changeclr
{\em Interference Modeling and Macro-pico Interference Graph:}  We now describe our
interference model. For the purpose of
eICIC algorithms, it is important to distinguish macro-pico interference from the rest.

\begin{itemize}

\item {\em Macro-pico interference:} For each pico-$p$, the set of macros that
interfere with it is denoted by ${\mI}_p\subseteq \mathcal{M}.$ The macros in the set
$\mI_p$ need to be silent during any ABS subframes used by pico-$p$. Thus, UEs of
pico-$p$ can be interfered by $m\in \mI_p$ only during non-ABS subframes.

\item {\em Macro-macro and pico-pico interference:} Due to
1:1 frequency re-use in most LTE networks, picos can interfere with each other and
similarly for the macros. A pico UE can be interfered by another pico both during ABS and non-ABS
subframes. 

\end{itemize}

To better understand the distinction between the two kinds of interferences from an eICIC
point of view, consider a pico associated UE-$u$'s interfering cells.
Suppose the total interference power it receives from all other interfering picos and all
interfering macros (those in the set ${\mI}_p$) be  $P^{pico}_{Int}(u)$ and $P^{macro}_{Int}(u)$, respectively.
Denoting by $P_{Rx}(u)$ the received downlink power of UE-$u$, the downlink SINR of
UE $u\in \mU_p$ can be modeled as
\begin{align}
\label{eqn:sinrpico}
& \mathsf{SINR}(u) = \\
& \left\{
\begin{array}{ll}
\frac{P_{Rx}(u)}{P^{pico}_{Int}(u) + N_0}\ , & \text{for ABS sub-frames} \\
\frac{P_{Rx}(u)}{P^{pico}_{Int}(u) + P^{macro}_{Int}(u) +  N_0}\ , & \text{for non-ABS sub-frames} 
\end{array} \right. \nonumber
\end{align}
This is because, during ABS sub-frames all interfering macros of pico-$p$ remain silent and
so the only interference is from the interfering picos of $p$. However, during non-ABS
sub-frames, there is interference from all interfering picos and macros both. 
Instead, if UE-$u$ were a macro-UE, the SINR expression would be 
\begin{equation}
\label{eqn:sinrmacro}
\mathsf{SINR}(u) = \frac{P_{Rx}(u)}{P^{pico}_{Int}(u) + P^{macro}_{Int}(u) +  N_0} \ ,
\end{equation}
where $P^{pico}_{Int}(u)$ and $P^{macro}_{Int}(u)$ denote the interference from interfering
picos and macros respectively.

{\em Thus the interference graph relevant from picos'  point of view is the bipartite graph
formed by  joining edges from any $p$ to macros in the set $\mI_p$.} The graph neighbors
of a pico-$p$ should all remain silent during ABS periods usable by picos.

\begin{remark}
The elements of ${\mI}_p$ can be obtained either through 
cell-adjacency relationship or based on whether the received signal from macro eNB to $p$ is 
above a threshold. 
\end{remark}

\normalclr

{\em User model:} To start with, we will consider a scenario where there is a set of
static UE's denoted by $\mU$, and also, we know for each UE-$u$ {\em (i)} the best
candidate macro in terms of RSRP and average PHY data-rate from the macro
$r_u^{macro}$ , {\em (ii)} the best candidate pico, if any, and average PHY data-rate
in ABS and non-ABS subframes given by $r_u^{pico,ABS}$ and $r_u^{pico}$ respectively.  
Note that, the value of $r_u^{macro}$ can be obtained from the SINR
expression~(\ref{eqn:sinrmacro}) using LTE table lookup for conversion from SINR to
rate, and similarly the values of $r_u^{pico,ABS}$ and $r_u^{pico}$ from the SINR
expression ~(\ref{eqn:sinrpico}).  Alternatively, one can use Shannon capacity
formula (with some offset in SINR) to obtain $r_u^{macro}, r_u^{pico,ABS},
r_u^{pico}$ from the corresponding SINR expressions.  Clearly, the average PHY data
rate that a UE receives from pico is higher in ABS frames due to reduced interference
from nearby macros. In fact, the average PHY data rate from a pico in a non-ABS
subframe is likely to be very small in many instances due to very high interference
from macro.

\changeclr 
Note that picos could be deployed indoor or outdoor
(typically outdoor) but makes no difference to our framework. Only the
pico to UE propagation (and hence data rates) change appropriately. \normalclr
The main parameters and some of the optimization variables (introduced later) are captured 
in Table~\ref{tab:params}.

\normalclr

\vspace{-0.15in}
\section{Problem Statement and Computational Hardness}
\label{sec:prob}

We will first develop an algorithm to find optimal ABS and CSB configuration with
static UEs scenario where we have the precise knowledge about number of UEs in
different cells along with PHY data rates. We will describe in
Section~\ref{sec:son}, how {\em Monte-Carlo based techniques can be used along with
our algorithm for this scenario where UE densities and SINR distributions are known
instead of exact UE locations.} 

The essence of eICIC approach is to compute optimal association (either to macro or
to a pico) rules for UEs, and also compute how macros and picos share radio
resources in time domain. Thus, we will first formulate a problem for the optimal
choice of {\em (i)} UE association, i.e., which UEs associate with the best macro
and which ones associate with the best pico, {\em (ii)} the number of ABS subframes
reserved for interfered picos by each macro eNB. We will denote by $N_{sf}$ as
the total number of subframes over which ABS subframes are reserved (typically,
$N_{sf}=40$). We will also refer to the quantity $N_{sf}$ as ABS-period. 

{\em Optimization variables:} Let $N_m$ be the number of sub-frames for which
macro-$m$ can transmit during each ABS-period (clearly, $N_{sf}-N_m$ ABS subframes
are offered by macro-$m$ in each ABS-period). Let $x_u$ be the time-average
air-time\footnote{This time-average airtime can be achieved through MAC scheduling,
particularly weighted proportional-fair scheduling. We are implicitly assuming two
time-scales here: the ABS selection happens at a slow time-scale and MAC scheduling
happens at a fast time-scale resulting in a time-average airtime for each UE.} in
sub-frames per ABS-period UE-$u$ gets from $m_u$, the candidate best macro for
UE-$u$. Note that $x_u$ need not be an integer. The airtime that UE-$u$ gets from a
pico can be during ABS subframes or regular subframes because pico eNBs can transmit
during ABS subframes and regular subframes. To this end, we define $y_u^A$ and
$y_u^{nA}$ as the time-average airtime in subframes per ABS-period UE-$u$ gets from pico
$p_u$ (best candidate pico of UE-$u$) in ABS subframes and regular subframes
respectively. Also, let $R_u$ be the average throughput UE-$u$ achieves.

{\em Optimization constraints:} These are explained below.
\begin{enumerate}

\item {\em Association constraints:}
The association constraint essentially states
that a UE can associate with either the macro or a pico but not both. Thus,
{\small
\begin{align}
\label{eqn:assoc}
\forall\ u\in \mU\ :&
\ \ x_u(y_u^A+y_u^{nA})=0 \\
& x_u\geq 0,\ y_u^A\geq 0,\ y_u^{nA}\geq 0\ \nonumber .
\end{align}}
so that, either the total airtime $u$ gets from macro is zero or the total airtime
$u$ gets from pico is zero.

\item {\em Throughput constraints:} This implies that the average throughput for a UE-$u$,
$R_u$, cannot be more than what is available based on the air-times from associated
macro/pico.
{\small
\begin{align}
\label{eqn:tpt}
\forall\ u\in \mU\ :&
\ \ R_u \leq r_u^{macro}x_u + r_u^{pico,ABS}y_u^A + r_u^{pico}y_u^{nA} \ .
\end{align}}
Note that, from~(\ref{eqn:assoc}), if UE-$u$ is associated to a macro, then 
$R_u\leq r_u^{macro}x_u$; if UE-$u$ is associated to a pico, then
$R_u\leq r_u^{pico,ABS}y_u^A\ +\ r_u^{pico}y_u^{nA}$.

\item {\em Interference constraints:} The interference constraint states that the ABS
subframes used by a pico $p$ are offered by {\em all} macros in the set ${\mI}_p$ that interfere 
with the pico. In other words,
{\small
\begin{align}
\label{eqn:intcon}
\forall\ (p,m\in {\mI}_p):\ \ 
A_p + N_m & \leq N_{sf}\  .
\end{align}}

\item {\em Total airtime constraints:} This ensures that the total time-average airtime 
allocated to UE's from a macro or a pico is less than the total usable subframes. This can be
described using the following inequalities.
{\small
\begin{align}
\label{eqn:xu}
\forall\ m\in \mM\ :\ \ & \sum_{u \in {\mU}_m} x_u  \leq N_m \\ 
\label{eqn:yuA}
\forall\ p\in \mP\ :\ \ & \sum_{u \in {\mU}_p} y_u^A \leq A_p \\  
\label{eqn:yunA}
\forall\ p\in \mP\ :\ \ &\sum_{u\in {\mU}_p}(y_u^A + y_u^{nA}) \leq N_{sf} 
\end{align}}
\end{enumerate}

{\em Optimization objective and problem statement:} The optimization objective we
choose is the tried and tested weighted proportional-fair objective which maximizes
$\sum_u w_u\ln R_u$, where $w_u$ represents a weight associated with UE $u$. This
choice of objective has three benefits. Firstly, it is well known that
proportional-fair objective strikes a very good balance between system throughput and
UE-throughput fairness~\cite{tse}. This bodes well with the goal of improving
cell-edge throughput using eICIC. Secondly, such a choice of objective gels very well
with the underlying LTE MAC where the most prevalent approach is to maximize a
proportional-fair metric. Finally, the weights $w_u$ provides a means for
service-differentiation~\cite{weightedpf01} which is a key element in LTE. The weights may be induced
from policy-determining functions within or outside of radio access network. Though
we use  proportional-fair metric in this paper, our algorithms can be easily
modified to work for other utility functions.

The problem can be stated as follows:

\begin{center}
\fbox{\parbox[c]{0.95\linewidth}{
\begin{center}
{\bf OPT-ABS}\\
\end{center}

{\em Given:} A set of UEs $\mU$, a set of picos $\mP$, and a set of macros $\mM$.
For each UE $u\in \mU$ we are given the following: best candidate parent macro $m_u$
along with PHY data rate to $m_u$ denoted by $r_u^{macro}$,
the best candidate parent pico $p_u$ along with
ABS and non-ABS PHY data rate to candidate parent pico denoted by $r_u^{pico,ABS}$
and $r_u^{pico}$ respectively. We are also given the macro-pico interference graph in
which the interfering macros of pico-$p$ is denoted by $\mI_p$. Finally, $N_{sf}$ is the total
number of subframes.\\

{\em To compute:} We wish to compute the number of ABS subframes $A_p$ each pico-$p$ can
use, the number of non-ABS subframes $N_m$ left for macro-$m$'s usage,
a {\em binary} decision on whether each UE-$u$ associates with its candidate parent pico or
candidate parent macro, throughput $R_u$ of each UE-$u$, so that the following optimization 
problem is solved:
{\begin{align*}
&\ \ \ \ \text{maximize}_{\{x_u,y_u^A,y_u^{nA},A_p,N_m,R_u\}} \sum_u w_u\ln R_u \\
&\ \ \ \ \text{subject to,} \ \ \ \ \  
(\ref{eqn:assoc}),\ 
(\ref{eqn:tpt}),\
(\ref{eqn:intcon}),\
%(\ref{eqn:intcon1}),\
(\ref{eqn:xu}),\ 
(\ref{eqn:yuA}),\ 
(\ref{eqn:yunA})\ \\
\label{eqn:intcon1}
&\ \ \ \ \ \ \ \ \ \ \ \ \ \ \ \ \ \ \ \
 \forall\ (p,m\in {\mI}_p):\ \ A_p,\ N_m  \in \mathbb{Z}^+\ , 
\end{align*}}
where $\mathbb{Z}^+$ denotes the space of non-negative integers. 
}}\\
\end{center}

We also call the optimization objective as {\em system
utility} and denote it, as a function
of the all UE's throughput-vector ${\mathbf R}$, by
\[\text{Util}({\mathbf R})\ =\ \sum_u w_u \ln R_u\ .\]

\begin{remark} 
\label{rem:icic}({\sc Accounting for ICIC})
Though eICIC or time-domain resource sharing is the preferred mode of resource
sharing between macro and pico cells in LTE for reasons mentioned in
Section~\ref{sec:otherintmit}, eICIC could co-exist with ICIC in macro-cells.
Our framework can easily account for this with only changes in the input to the
problem. In ICIC, the OFDMA sub-carriers of a macro are partitioned into two parts: low-power
sub-carriers of total bandwidth $B_l$ and high-power sub-carriers of total bandwidth
$B_h$. Thus for a UE $u$ that receives signal from macro-$m$, the spectral
efficiency over low-power subcarriers (say, $\eta_u^{low}$) is different from spectral
efficiency over high-power
subcarriers (say, $\eta_u^{high})$. This causes the downlink rate from the candidate
parent macro,
$r_u^{macro}$, to be expressed as $r_u^{macro}=\eta_u^{high} B_h + \eta_u^{low} B_l$.
Furthermore, the reduced interference from ICIC-using macros changes
the macro-pico interference graph structure and pico to UE non-ABS rates.
Importantly, {\bf only the input to OPT-ABS has to be modified to account for ICIC in
frequency domain.}
\end{remark}

\begin{remark}
\label{rem:comp}({\sc CoMP})
Our framework assumes no CoMP based deployments. This would be the case in the most
LTE deployments in the foreseeable future due to practical challenges of CoMP
outlined in Section~\ref{sec:otherintmit}. Technically speaking, to optimize a
modified version of eICIC that accounts for CoMP, the association constraint given 
by~(\ref{eqn:assoc}) would not be required,
the throughput constraint~(\ref{eqn:tpt}) must account for collaborating cells in
CoMP, and the total airtime constraints must be modified to reflect CoMP. We leave
this as a future work.
\end{remark}

\normalclr

\vspace{-0.15in}
\subsection{Computational hardness}

It can be shown that the ABS-optimization problem is NP-hard even with a single macro
but multiple picos. We state the result as follows.

\begin{proposition}
\label{THM:NPHARD}
Even with a single pico and a single interfering macro, the OPT-ABS problem is NP-hard unless $P=NP$.
\end{proposition}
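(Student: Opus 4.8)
The plan is to reduce from \textsc{Partition}, which is NP-complete: given positive integers $a_1,\dots,a_n$ with $\sum_i a_i=2T$, decide whether some $S\subseteq\{1,\dots,n\}$ satisfies $\sum_{i\in S}a_i=T$. From such an instance I would construct an OPT-ABS instance with one macro $m$, one pico $p$ having $\mI_p=\{m\}$, $N_{sf}$ equal to its nominal value (any fixed positive integer works), and $n$ UEs, where UE-$u$ has weight $w_u=a_u$, candidate parent macro $m$, candidate parent pico $p$, and all three data rates equal: $r_u^{macro}=r_u^{pico,ABS}=r_u^{pico}=1$. The construction is clearly polynomial, and the key claim is that an optimal OPT-ABS solution associates UEs of total weight exactly $T$ with the macro if and only if the \textsc{Partition} instance is a YES-instance. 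Reading $\sum_{u:\,x_u>0}w_u$ off an optimal solution would then decide \textsc{Partition}, so OPT-ABS cannot be solved in polynomial time unless $P=NP$.

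The heart of the argument is a structural simplification of the objective once the UE association is fixed. Since a UE served by neither cell has $R_u=0$ and is never optimal, an optimal solution induces a split $\mU=S_m\sqcup S_p$ into macro-served and pico-served UEs. Because all rates equal $1$, because the pico can transmit in every subframe, and because constraint~(\ref{eqn:yunA}) caps only the pico's \emph{total} air-time at $N_{sf}$ (irrespective of $A_p$ and of the macro's schedule), I would show that the best completion of any fixed split takes $N_m=N_{sf}$, $A_p=0$, and allocates air-time within each cell proportionally to the weights (weighted proportional-fair water-filling), giving $R_u=N_{sf}w_u/W_{S_m}$ for $u\in S_m$ and $R_u=N_{sf}w_u/W_{S_p}$ for $u\in S_p$, with $W_S:=\sum_{u\in S}w_u$. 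Its utility equals
\[
\Phi(S_m)=\Bigl(\sum_u w_u\ln(N_{sf}w_u)\Bigr)-W_{S_m}\ln W_{S_m}-(W-W_{S_m})\ln(W-W_{S_m}),\qquad W=2T,
\]
with the convention $0\ln 0=0$. Thus maximizing the system utility over all associations, ABS counts and air-times amounts to minimizing $g(W_{S_m}):=W_{S_m}\ln W_{S_m}+(W-W_{S_m})\ln(W-W_{S_m})$ over the attainable subset sums $W_{S_m}$. As $x\mapsto x\ln x$ is strictly convex, $g$ is strictly convex on $[0,W]$ with unique minimum at $W_{S_m}=W/2=T$; hence $W_{S_m}=T$ is an attainable subset sum exactly when \textsc{Partition} is a YES-instance, and in that case it is the unique minimizer of $g$ among attainable sums, so in that case every optimal OPT-ABS solution has macro-weight exactly $T$, whereas when no subset sums to $T$ no optimal solution does. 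This proves the claim.

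I expect the structural step to be the only part requiring care: one must verify that, with equal rates, letting the pico piggyback on the non-ABS subframes decouples the two cells' air-time budgets, so the eICIC variables $N_m$ and $A_p$ are chosen freely and the only coupling remaining in the proportional-fair objective is the convex ``unbalanced-split penalty'' $g(W_{S_m})$, which is exactly what \textsc{Partition} controls. The other ingredients --- water-filling for $\sum_u w_u\ln R_u$, strict convexity of $x\ln x$, and NP-completeness of \textsc{Partition} --- are standard. As a byproduct, since the optimal $N_m$ in this construction is already an integer, the hardness persists even if $A_p$ and $N_m$ are allowed to be real-valued: the difficulty comes entirely from the binary UE-association decisions.
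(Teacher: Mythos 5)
Your proposal is correct, and it follows essentially the same strategy as the paper, which proves Proposition~\ref{THM:NPHARD} by reducing \textsc{Subset-Sum} to OPT-ABS (deferring details to the longer version): you instead reduce from \textsc{Partition}, the symmetric special case, which lets equal rates and unit-rate cells turn the proportional-fair objective into the strictly convex split penalty $W_{S_m}\ln W_{S_m}+(W-W_{S_m})\ln(W-W_{S_m})$ whose minimizer is the balanced subset sum. Your completion of the argument (decoupling the two cells' $N_{sf}$ air-time budgets via $A_p=0$, $N_m=N_{sf}$, and the water-filling step) is sound and supplies the details the paper omits.
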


\begin{proof}
Follows by reducing the SUBSET-SUM problem~\cite{gj} to an instance of
OPT-ABS problem. See\cite{eiciclonger}. %Appendix~\ref{app:proofnphard} for a proof-sketch. 
\end{proof}

In light of the above result, we can only hope to have algorithm that is provably
a good approximation to the optimal. In the following, we will develop an algorithm
with a constant-factor worst case guarantee; we show extensive simulation results to
demonstrate that our algorithm is within 90\% of the optimal is many practical
scenarios of interest.

%Objective
%System constraints

\vspace{-0.15in}
\section{Algorithm Overview}
\label{sec:algo1}

Our approach to the problem is to solve it in two steps.
\begin{enumerate}

\item {\em Solving the relaxed NLP:} In the first step, we solve
the non-linear program (NLP) obtained by ignoring integrality constraints on $A_p$ and
$N_m$ and also the constraint that a UE can receive data either from pico or macro
but not both. Specifically, in this step, we maximize $\text{Util}({\mathbf R})$ subject to the
constraints (\ref{eqn:tpt})\--(\ref{eqn:yunA}); and we
allow $A_p$ and $N_m$ to take non-integer values.  Note that ignoring the
constraint~(\ref{eqn:assoc}) means that UEs can receive radio resources from macro
and pico both.
For notational convenience, we also denote the vector of constraints 
(\ref{eqn:tpt})\--(\ref{eqn:yunA}), in a compact form as 
${{\mathbf g}}_R(.)\leq 0\ .$ Thus, the RELAXED-ABS problem can be denoted as,
\begin{align*}
& RELAXED-ABS: \\
&\ \ \ \ \text{maximize}_{\{x_u,y_u^A,y_u^{nA},A_p,N_m\}} \sum_u w_u\ln R_u \\
&\ \ \ \ \text{subject to,} \ \ \ \ \  {{\mathbf g}}_R(.)\leq 0 \\
%&\ \ \ \ \ \ \ \ \ \ \ \ \ \ \ \ \ \ \ \ 
&\ \ \ \ \forall\ (p,m\in {\mI}_p):\ \ \ \ A_p,\ N_m  \in \mathbb{R}^+\ , 
\end{align*}
where $\mathbb{R}^+$ denotes the space of non-negative real numbers and
${{\mathbf g}}_R(.)$ denotes the vector of constraints
(\ref{eqn:tpt})\--(\ref{eqn:yunA}).
 
\item {\em Integer rounding:} In the second step, we appropriately round the output
of the non-linear optimization to yield a solution to the original problem that is
feasible.

\end{enumerate}

%%%%%%%%%%%%%%%%%%%%%%%%%%%%%%%%%%%%%%%%%%%%%%%%%%%%%%%%%%%%%%%%%%%%%%%%%%%%%%%%%%%%%%%%%%%%
\iffalse
There are certain challenges in resolving the above steps. The usual approach in solving such NLP is through
a dual or a primal-dual approach. However, as we will soon find out, the dual of the problem is 
not differentiable everywhere and this necessitates the use of {\em sub-gradient} instead of
gradients. The key challenge is to find a suitable sub-gradient such that we have an algorithm that converges
and is also scalable. The second challenge lies in rounding the NLP solution to obtain a feasible solution
to the original ABS optimization problem so that the final solution is a good approximation to the
optimal solution. Recall that, the original problem is computationally hard and so the best we can hope for
is an approximate solution that has polynomial complexity. 
\fi
%%%%%%%%%%%%%%%%%%%%%%%%%%%%%%%%%%%%%%%%%%%%%%%%%%%%%%%%%%%%%%%%%%%%%%%%%%%%%%%%%%%%%%%%%%%%

\vspace{-0.15in}
\section{Algorithm for Relaxed Non Linear Program} 
\label{sec:algo2}

Towards solving ABS-RELAXED, we use a dual based approach~\cite{asusubgrad,bertNLP}
which has been successfully applied to many networking problems for its simplicity of
implementation~\cite{chelow06,srikantmoi}.  In the following, we show that, a dual
based approach to our problem lends to a decomposition that greatly reduces the
algorithmic complexity and also makes the approach amenable to distributed
implementation while retaining the core essence.

In a dual based approach, it is crucial to define a suitable notion of feasible
sub-space so that the solution in each iteration is forced to lie within that
sub-space. The choice of the sub-space also has implication on the convergence speed
of the algorithm. To this end, we define the sub-space $\bPi$ as follows:

\begin{align}
\bPi=&\{{\mathbf {x,y,A,N}}: A_p\leq N_{sf}, N_m\leq N_{sf},
\sum_{u\in {\mU}_m}x_u\leq N_{sf}, \nonumber \\
& \sum_{u\in {\mU}_p}y_u^A\leq N_{sf},
\sum_{u\in {\mU}_p}y_u^{nA}\leq N_{sf},\ \ \forall\ m,p
\}
\end{align}

We have used bold-face notations to denote vectors of variables.  
Clearly, any solution that
satisfies the constraints described in the previous section lies within $\bPi$.  In
the following discussion, even without explicit mention, it is understood that
optimization variables always lie in $\bPi$.

We now describe the non-linear program (NLP) obtained by treating $A_p$ and $N_m$ as
real numbers.  The Lagrangian of the relaxed NLP can be expressed as follows:
{\small
\begin{align}
\label{eqn:lag}
&\mathcal{L}({\mathbf{x,y,A,N, {\boldsymbol{\lambda,\mu,\beta,\alpha}}}})
= \sum_u w_u\ln R_u  \\
&-\sum_u \lambda_u (R_u - r_u^{macro}x_u - r_u^{pico,ABS}y_u^A - r_u^{pico}y_u^{nA})
\nonumber\\
&- \sum_{p,m\in {\mI}_p}\mu_{p,m}(A_p + N_m - N_{sf})
\nonumber\\
& -\sum_m \beta_m (\sum_{u \in {\mU}_m} x_u  - N_m)  - \sum_p \beta_p (\sum_{u \in {\mU}_p} y_u^A - A_p)  
\nonumber\\
&- \sum_p \alpha_p(\sum_{u\in {\mU}_p}(y_u^A + y_u^{nA}) - N_{sf}) 
\nonumber 
\end{align}}

{\em Additional notations:} We use bold-face notations to express vectors. For example
$\boldsymbol{\lambda}$ denotes the vector of values $\lambda_u$. The variables
$\lambda,\mu,\beta,\alpha$'s are dual variables and so called Lagrange-multipliers
which also have a price interpretation.  In the rest of the paper, $\bp$ denotes the
vector of all dual variables, i.e., $\bp=(\lambda,\mu,\beta,\alpha)$\footnote{The
dual variables are often denoted by $\bp$ because they have the interpretation of
prices.}. Similarly, the variables  $\mathbf {x,y,A,N}$ are referred to as primal
variables and we use $\bz$ to denote the vector of all primal variables, i.e.,
$\bz=(\mathbf {x,y,A,N})$. Thus, we denote the Lagrangian by $\mL(\bz,\bp)$ and
express it as
\[ \mL(\bz,\bp) = \text{Util}({\mathbf R})\ -\ {\mathbf p}' {{\mathbf g}}_R({\mathbf z})\ .\]

%\subsection{Dual Algorithm for Solving RELAXED-ABS}

The dual problem of RELAXED-ABS can be expressed as
\begin{equation}
\label{eqn:dual}
\min_{\bp\geq \mathbf{0}}
\mathcal{D}(\bp)\ ,
\end{equation}
where,
\begin{equation}
\label{eqn:dualcost}
\mathcal{D}(\bp)
=\max_{\bz \in \bPi} 
\mathcal{L}(\bz,\ \bp)\ .
\end{equation}
Since the RELAXED-ABS is a maximization problem with concave objective and convex feasible
region, it follows that there is no duality gap~\cite{bertNLP}, and thus
\[\text{RELAXED-ABS Optimal}
=\min_{\bp}
\mathcal{D}(\bp)\ . \]

{\bf Iterative steps:} 
First the primal variables are
initialized to any value with $\bPi$ and the dual variables are initialized to zero,
and then, the following steps are iterated (we show the update for iteration-$t$):

\begin{enumerate}

\item {\em Greedy primal update:} The primal variables ${\mathbf z}_t$ in
iteration-$(t+1)$ are set as
\begin{equation}
\label{eqn:pu}
{\mathbf z}_{t+1}\ =\ \arg\max_{{\mathbf z}\in \bPi} \mathcal{L}(\bz,\ \bp_{t})\ .
\end{equation}

\item {\em Subgradient descent based dual update:} The dual variables are updated in a
gradient descent like manner as
\begin{equation}
\label{eqn:du}
\bp_{t+1} =[\bp_{t} +\gamma {\mathbf g}_R({\mathbf z}_{t})]^+\ , 
\end{equation}
where $\bp_t$ is the dual variable at iteration-$t$, $\gamma$ is the step-size,
and $[.]^+$ denotes component-wise projection into the space of non-negative real numbers. 

%and  $(p_t)$ is the (sub)gradient of $\mD(\bp_t)$ 

\end{enumerate}

The above steps are continued for sufficiently large number of iterations $T$
and the optimal solution to RELAXED-ABS is produced as
\[ {\hat{{\mathbf z}}}_T=\smfrac{1}{T}\sum_{t=1}^{T}{\mathbf z}_t\ .\]

The computation of {\em greedy primal update} is not immediate at a first glance, however, the
{\em subgradient descent based dual update} step is straightforward. Thus,
for the above algorithm to work, there are two important questions that need to be
answered: {\em (i)} how can the {\em greedy primal update} step be performed
efficiently? {\em (ii)} how should the step size $\gamma$ and the number of
iterations $T$ be chosen? In the following, we answer these questions.

\vspace{-0.15in}
\subsection{Greedy Primal Update: Decomposition Based Approach}

We now argue that the problem of computing 
\[\arg\max_{{\mathbf z}\in \bPi} \mathcal{L}(\bz,\ \bp_{t-1})\]
can be decomposed into UE problem, macro problem and pico problem each of which is
fairly straightforward. Towards this end, we rewrite $\mathcal{L}(\bz,\ \bp)$ as
follows.

{\small
\begin{align*}
\mathcal{L}(\bz,\bp) & = \sum_u F_u(\bp,R_u)
+\sum_m G_m(\bp,\{x_u\}_{u\in {\mU}_m},N_m)\\ 
&\ \ +\sum_p H_p(\bp,\{y_u\}_{u\in {\mU}_p},A_p) -N_{sf}\ ,
\end{align*}
}
where,
{\small
\begin{align*}
& F_u(\bp,R_u)
=w_u\ln R_u-\lambda_uR_u \\
& G_m(\bp,\{x_u\}_{u\in {\mU}_m},N_m)=N_m(\beta_m-\sum_{p:m\in {\mI}_p}\mu_{p,m})\\
&+\ \sum_{u\in {\mU}_m}x_u(\lambda_ur_u^{macro}-\beta_{m})\\
& H_p(\bp,\{y_u\}_{u\in {\mU}_p},A_p)= A_p(\beta_p-\sum_{m:m\in {\mI}_p}\mu_{p,m})\\
&+\ \sum_{u\in {\mU}_p}y_u^A(\lambda_ur_u^{pico,ABS}-\beta_{p}-\alpha_{p})
+\ \sum_{u\in {\mU}_p}y_u^{nA}(\lambda_ur_u^{pico}-\alpha_{p})
\end{align*}
}

It follows that,
{\small
\begin{align*}
&\max_{{\mathbf z}\in \bPi} \mathcal{L}(\bz,\bp) \\
& = \sum_u \max_{{R_u}} F_u(\bp,R_u)
 +\sum_m \max G_m(\bp,\{x_u\}_{u\in {\mU}_m},N_m) \\
&\ \ \ +\sum_p \max H_p(\bp,\{y_u\}_{u\in {\mU}_p},A_p) -N_{sf}
\end{align*}
}
where the $\max$ in the above is with respect to appropriate primal variables from $x_u, y_u$'s,
$N_m$'s and $A_p$'s. The above simplification shows that {\em greedy primal update}
step can be broken up into sub-problems corresponding to individual UEs,
individual macros, and individual picos; each of the sub-problems has a
solution that is easy to compute as follows. We thus have the following.

{\bf Greedy primal update:} In iteration-$t$, the greedy primal updates are as
follows:
\begin{itemize}

\item {\em User primal update:} 
In {\em greedy primal update} step of iteration-$(t+1)$, 
for each UE-$u$, we maximize $F_u(\bp_{t},R_u)$ by choosing $R_u(t+1)$ as
\begin{equation}
\label{eqn:upd}
R_u(t+1) = \smfrac{w_u}{\lambda_u(t)}\ .
\end{equation}

\item {\em Macro primal update:}
In {\em greedy primal update} step of iteration-$(t+1)$, for
each macro-$m$, we maximize $G_m(\bp_{t},\{x_u\}_{u\in {\mU}_m},N_m)$ by choosing
%$N_m$ and $\{x_u\}_{u\in {\mU}_m}$ as follows. The  macros set 
$N_m$ as
\begin{equation}
\label{eqn:mpd1}
N_m(t+1)  =
N_{sf}\ind{(\beta_m(t)-\sum_{p:m\in {\mI}_p}\mu_{p,m}(t)>0)}\ .
\end{equation}
To compute all $\{x_u\}_{u\in {\mU}_m}$, each macro-$m$ computes the best UE $u_m^*$ in
iteration-$t$ as 
$$ u_m^* = \arg\max_{u\in {\mU}_m}(\lambda_u(t) r_u^{macro}-\beta_{m}(t)>0) $$
where ties are broken at random.
Macro-$m$ then chooses $x_u(t+1), u\in {\mU}_m$ as 
\begin{equation}
\label{eqn:mpd2}
x_u(t+1)=
\left\{
\begin{array}{ll} 
N_{sf} & \text{for}\ u=u_m^* \\ 
0 & \text{for}\ u\neq u_m^*
\end{array}
\right.
\end{equation}

\item{\em Pico primal update:}
In iteration-$(t+1)$, for each pico-$p$, we maximize 
$H_p(\bp_{t},\{y_u\}_{u\in {\mU}_p},A_p)$ by choosing
\begin{equation}
\label{eqn:ppd1}
A_p(t+1)  =
N_{sf}\ind{(\beta_p(t)-\sum_{m:m\in {\mI}_p}\mu_{p,m}(t)>0)}\ .
\end{equation}
To compute all $\{y_u\}_{u\in \mU_p}$'s, each pico-$p$ computes the current best UE 
$u_{p}^*(ABS)$ and $u_p^*(nABS)$ as follows:
$$ u_{\bp}^*(ABS) = \arg\max_{u\in {\mU}_p}
(\lambda_ur_u^{pico,ABS}-\beta_{p}(t)-\alpha_{p}(t)>0)\ ,$$
$$ u_{\bp}^*(nABS) = \arg\max_{u\in {\mU}_p}
(\lambda_u(t)r_u^{pico}-\alpha_{p}(t)>0)\ .$$
where 
%the dual variables take values in the previous iteration 
ties are broken at random. Pico-$p$ then chooses $y_u(t+1), u\in {\mU}_p$ as 
\begin{equation}
\label{eqn:ppd2}
y_u^A(t+1)=
\left\{
\begin{array}{ll} 
N_{sf} & \text{for}\ u=u_{\bp}^*(ABS) \\ 
0 & \text{for}\ u\neq u_{\bp}^*(ABS)
\end{array}
\right.
\end{equation}
Similarly, we set $y_u^{nA}(t)$ based on $u_{\bp}^*(nABS)$ as follows.
\begin{equation}
\label{eqn:ppd3}
y_u^{nA}(t+1)=
\left\{
\begin{array}{ll} 
N_{sf} & \text{for}\ u=u_{\bp}^*(nABS) \\ 
0 & \text{for}\ u\neq u_{\bp}^*(nABS)
\end{array}
\right.
\end{equation}

\end{itemize}

\vspace{-0.15in}
\subsection{Overall Algorithm for RELAXED-ABS}

We now summarize the algorithmic steps for solving RELAXED-ABS.
Algorithm~\ref{algo:abs-relax} formally describes our algorithm.

\begin{algorithm}[t]
\caption{\textsc{Optimal RELAXED-ABS}:
Algorithm for Solving RELAXED-ABS}
\label{algo:abs-relax} 
{\small
\begin{algorithmic}[1]

\STATE {\em Initialization:} Initialize all the variables
$\boldsymbol{x, y, A, N, \lambda, \mu, \beta, \alpha}$ to any feasible value.

\FOR {$t=0,2,3,\hdots, T$ iterations}

\STATE {\em Primal update:} Update the primal variables ${\mathbf z}(t)$ by
using UE's update given by~(\ref{eqn:upd}), macro updates given
by~(\ref{eqn:mpd1}),~(\ref{eqn:mpd2}), and pico updates given
by~(\ref{eqn:ppd1}),~(\ref{eqn:ppd2},~(\ref{eqn:ppd3}).

\STATE {\em Dual Update:} For each UE-$u$ $\lambda_u(t)$ is updated as
$$\lambda_u(t)\leftarrow [\lambda_u(t-1)+$$
$$\ \ \ \ \gamma(R_u(t) - r_u^{macro}x_u(t) - r_u^{pico,ABS}y_u^A(t) -
r_u^{pico}y_u^{nA}(t))]^+\ .$$

For each macro-$m$, we update its dual price $\beta_m$ as follows:
$$\beta_m(t)\leftarrow [\beta_m(t-1) +\gamma(\sum_{u \in {\mU}_m} x_u(t)  -
N_m(t))]^+$$ 

For each pico $p$, we update all dual variables $\beta_p,\alpha_p$ and $\mu_{p,m}$ for all $m\in {\mI}_p$, 
as follows:
\begin{align*}
\mu_{p,m}(t) & \leftarrow [\mu_{p,m}(t-1)+\gamma(A_p(t) + N_m(t) - 
N_{sf})]^+ \\
\beta_p(t) & \leftarrow [\beta_p(t-1)+\gamma (\sum_{u \in {\mU}_p} y_u^A(t) -
A_p(t))]^+
\\ 
\alpha_p(t) & \leftarrow [\alpha_p(t-1) +\gamma(\sum_{u\in {\mU}_p}(y_u^A(t) + y_u^{nA}(t))
- N_{sf})]^+
\end{align*}

\ENDFOR

\STATE The optimal values of the NLP are obtained by averaging over all iterations:
\begin{equation*}
\hat{{\mathbf z}}_T=\smfrac{1}{T}\sum_{t=1}^{T}{\mathbf z}_t,\ 
\end{equation*}

\end{algorithmic}
}
\end{algorithm}

We next derive the step-size and sufficient number of iterations in terms of the
problem parameters.

%\begin{remark}
%The number of iterations and the step size are inter-related and can be chosen to trade-off between
%convergence time and approximation to optimal. The rules for setting these parameters
%are derived in the next subsection.
%\end{remark}

\vspace{-0.15in}
\subsection{Step-size and Iteration Rule using Convergence Analysis}

Towards the goal of estimating the step-size and number of iterations, we adapt
convergence analysis for a generic dual based algorithm is provided
in~\cite{asusubgrad}. We show that the structure of ABS-RELAXED lends to a simple
characterization of the step-size and number of iterations in terms of problem
parameters.

%The analysis sheds light on the convergence of
%three metric as functions of number of iterations and step-size, namely, {\em (i)} dual objective
%$\mathcal{D}(.)$, {\em (ii)} constraint feasibility, and {\em (iii)} primal objective
%$\text{Util}(.)$. The constraint feasibility is important because the dual based subgradient
%approach does not ensure feasibility of primal variable ${\mathbf z}$ in every
%iteration, however, we will show that the average of ${\mathbf z}$ across all
%iterations converge to a feasible solution.  

In this section, we denote by $r_{max}$ and $r_{min}$ as the maximum and minimum data
rate of any UE respectively. We also denote by $W_m$ as the total weight of all
candidate UEs of macro-$m$ and similarly for $W_p$. Also ${\mathbf W}$ denotes the
vector of $W_m$'s and $W_p$'s. Also $U_m$, $U_p$, $U_{max}$ denote, respectively, the
number of candidate UEs in macro-$m$, number of candidate UEs in pico-$p$, and
maximum number of UEs in any macro or pico.

\begin{proposition}
\label{THM:DUALCONV}

Let $\bz_t,\hat{\bz}_t,{\bz}^*$ ($\bp_t,\hat{\bp}_t,{\bp}^*$) denote the vector of primal
(dual) variables at time $t$, averaged over all iterations from $0\--t$, and at
optimality, respectively. Under mild technical assumptions~\cite{eiciclonger}, 
we have the following:
{\small
\begin{align*}
{(i)}&\ \  \mD(\hat{\bp}_T)-\mD({\bp}^*)\leq \smfrac{B^2}{2\gamma T} 
+ \smfrac{\gamma Q^2}{2}\\
%{(ii)}&\ \ 
%\norm{\bg_R({\hat{\bz}}_T)}\leq \smfrac{\norm{\bp_T}}{\gamma T} \\
{(ii)}&\ \ 
\text{Util}({\mathbf{R}}^*)-\text{Util}(\hat{\mathbf{R}}_T)\leq \smfrac{\gamma Q^2}{2} 
\end{align*}}
where
{\small
\begin{align*}
Q^2 & = N_{sf}^2(N r_{max}^2 + M + P + 2I ) \\
B^2 & = 
\smfrac{\normsq{{\mathbf W}}}{N_{sf}^2}(1+2I_{max}+\smfrac{U_{max}}{r_{min}})\ .
\end{align*}}
\end{proposition}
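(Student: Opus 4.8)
The plan is to invoke the standard convergence theory for dual subgradient methods applied to convex programs (as in~\cite{asusubgrad,bertNLP}) and then specialize the two abstract constants --- the bound $B$ on the distance of the initial dual iterate from the optimal dual, and the bound $Q$ on the norm of the subgradient --- to the concrete structure of RELAXED-ABS. Recall that RELAXED-ABS is $\max_{\bz\in\bPi}\,\text{Util}(\mathbf{R})$ subject to $\mathbf{g}_R(\bz)\le 0$, with $\mathcal{D}(\bp)=\max_{\bz\in\bPi}\mathcal{L}(\bz,\bp)$, and that for any primal maximizer $\bz(\bp)$ the vector $\mathbf{g}_R(\bz(\bp))$ is a subgradient of $\mathcal{D}$ at $\bp$ (Danskin-type argument). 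The generic lemma I would quote states: for the iterates $\bp_{t+1}=[\bp_t+\gamma\,\mathbf{g}_R(\bz_t)]^+$ with constant step size $\gamma$, if $\|\bp_0-\bp^*\|\le B$ and $\|\mathbf{g}_R(\bz_t)\|\le Q$ for all $t$, then the running-average dual iterate $\hat{\bp}_T$ satisfies $\mathcal{D}(\hat{\bp}_T)-\mathcal{D}(\bp^*)\le \frac{B^2}{2\gamma T}+\frac{\gamma Q^2}{2}$, which is exactly part (i). This is obtained from the one-step inequality $\|\bp_{t+1}-\bp^*\|^2\le \|\bp_t-\bp^*\|^2-2\gamma(\mathcal{D}(\bp_t)-\mathcal{D}(\bp^*))+\gamma^2 Q^2$ (using the projection nonexpansiveness and the subgradient inequality $\mathcal{D}(\bp^*)\ge \mathcal{D}(\bp_t)+\mathbf{g}_R(\bz_t)'(\bp^*-\bp_t)$), then telescoping, dividing by $2\gamma T$, and applying convexity of $\mathcal{D}$ (Jensen) to pass from the average of $\mathcal{D}(\bp_t)$ to $\mathcal{D}(\hat{\bp}_T)$.

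For part (ii), the plan is to run the complementary primal-averaging argument. Since $\bz_t$ maximizes $\mathcal{L}(\cdot,\bp_t)$ over $\bPi$, we have $\mathcal{L}(\bz_t,\bp_t)=\mathcal{D}(\bp_t)\ge \mathcal{D}(\bp^*)=\text{Util}(\mathbf{R}^*)$ (the last equality by strong duality, which holds since the objective is concave, the feasible region convex, and Slater-type conditions hold --- this is where the ``mild technical assumptions'' enter). Writing $\mathcal{L}(\bz_t,\bp_t)=\text{Util}(\mathbf{R}_t)-\bp_t'\mathbf{g}_R(\bz_t)$ and averaging over $t=1,\dots,T$, concavity of $\text{Util}$ gives $\text{Util}(\hat{\mathbf{R}}_T)\ge \frac{1}{T}\sum_t \text{Util}(\mathbf{R}_t)$, while the cross term $\frac{1}{T}\sum_t \bp_t'\mathbf{g}_R(\bz_t)$ is controlled by the telescoping identity $\bp_t'\mathbf{g}_R(\bz_t)\ge \frac{1}{2\gamma}(\|\bp_{t+1}-\bp_t\|^2\ \text{type terms})$; more directly, one uses $\|\bp_{t+1}\|^2\le \|\bp_t\|^2+2\gamma\bp_t'\mathbf{g}_R(\bz_t)+\gamma^2 Q^2$ restricted to the coordinates where the projection is inactive, together with $\bp_0=\mathbf{0}$, to get $\frac{1}{T}\sum_t \bp_t'\mathbf{g}_R(\bz_t)\ge -\frac{\gamma Q^2}{2}$ after rearrangement. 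Combining yields $\text{Util}(\hat{\mathbf{R}}_T)\ge \text{Util}(\mathbf{R}^*)-\frac{\gamma Q^2}{2}$, which is (ii). (The absence of a $B^2/(2\gamma T)$ term here is because the primal gap is bounded by the cross term alone once we exploit $\bp_0=\mathbf{0}$.)

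The remaining and genuinely paper-specific work is to verify the explicit formulas for $Q^2$ and $B^2$. For $Q^2$: $\mathbf{g}_R(\bz)$ stacks the residuals of constraints~(\ref{eqn:tpt})--(\ref{eqn:yunA}) evaluated at a greedy primal point, where each of $x_u,y_u^A,y_u^{nA},A_p,N_m$ is either $0$ or $N_{sf}$ and $R_u=w_u/\lambda_u$. The throughput residuals~(\ref{eqn:tpt}) contribute at most $N_{sf}^2 r_{max}^2$ each over $N$ UEs (the $R_u$ part being dominated appropriately under the technical assumption bounding $\lambda_u$ below), giving the $N r_{max}^2$ term; the macro airtime residuals~(\ref{eqn:xu}) contribute $N_{sf}^2$ each over $M$ macros; the pico ABS-airtime residuals~(\ref{eqn:yuA}) contribute $N_{sf}^2$ each over $P$ picos; the total-airtime residuals~(\ref{eqn:yunA}) likewise; and the $2I$ term (with $I=\sum_p|\mathcal{I}_p|$ the number of edges in the macro-pico interference graph) comes from the interference residuals~(\ref{eqn:intcon}), each of which is at most $N_{sf}$ in magnitude so contributes $N_{sf}^2$, counted with the factor reflecting how $A_p+N_m-N_{sf}$ appears --- I would double-check whether the ``$2I$'' absorbs a factor from both $A_p$ and $N_m$ ranging over $\{0,N_{sf}\}$. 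Collecting, $Q^2=N_{sf}^2(N r_{max}^2+M+P+2I)$. For $B^2=\|\bp_0-\bp^*\|^2$: with $\bp_0=\mathbf{0}$ this is $\|\bp^*\|^2=\sum_u (\lambda_u^*)^2+\sum_{p,m}(\mu_{p,m}^*)^2+\sum_m(\beta_m^*)^2+\sum_p(\beta_p^*+\alpha_p^*\ \text{etc.})^2$, and one bounds the optimal duals via KKT stationarity: differentiating $\mathcal{L}$ in $R_u$ gives $\lambda_u^*=w_u/R_u^*\le w_u/(\text{minimum achievable throughput})$, and the minimum throughput is at least on the order of $r_{min}N_{sf}/U_{max}$ (a UE can always be guaranteed a fair airtime share), so $\lambda_u^*\lesssim w_u U_{max}/(r_{min}N_{sf})$; the prices $\beta,\alpha,\mu$ are then bounded in terms of sums of the $\lambda_u^*$ over the UEs of a cell and the interference degrees, which after squaring and summing produces a bound of the form $\frac{\|\mathbf{W}\|^2}{N_{sf}^2}(1+2I_{max}+\frac{U_{max}}{r_{min}})$ with $I_{max}=\max_p|\mathcal{I}_p|$. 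Once $B$ and $Q$ are pinned down, the step size $\gamma$ and iteration count $T$ are read off by balancing $\frac{B^2}{2\gamma T}$ against $\frac{\gamma Q^2}{2}$. I expect the main obstacle to be exactly this bookkeeping of the KKT multipliers for $B^2$ --- in particular, justifying the lower bound on $R_u^*$ rigorously (which is presumably where the ``mild technical assumptions'' of~\cite{eiciclonger} do their work) and matching the combinatorial factors $I_{max}$, $U_{max}$ to the claimed closed form; the subgradient-norm bound $Q^2$ is comparatively mechanical given the $0$/$N_{sf}$ structure of the greedy primal updates.
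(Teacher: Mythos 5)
Your proposal follows exactly the route the paper takes: it defers the full proof to~\cite{eiciclonger}, but its accompanying remark states that the argument is an adaptation of the dual subgradient convergence analysis of~\cite{asusubgrad}, with the problem-specific contribution being the bound on $\normsq{{\mathbf p}^*}$ in terms of network parameters (via the KKT relation $\lambda_u^*=w_u/R_u^*$ and the $0$-or-$N_{sf}$ structure of the greedy primal updates for $Q$) --- which is precisely your plan, including your correct identification that the ``mild technical assumptions'' serve to lower-bound $R_u^*$ and hence upper-bound the optimal multipliers. The only caveat is that the exact combinatorial bookkeeping for the constants ($2I$, $I_{max}$, $U_{max}/r_{min}$) cannot be checked against this paper since the detailed proof lives in the cited longer version, but your accounting is consistent with the stated formulas.
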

\begin{proof}
See \cite{eiciclonger}. %Appendix~\ref{app:proofconv}.
\end{proof}

\changeclr

\begin{remark}({\sc On the proof of Proposition~\ref{THM:DUALCONV}})
The main contribution of the proof of Proposition~\ref{THM:DUALCONV} is to show that the
norm of optimal dual variable $\normsq{{\mathbf p}^*}$ can be upper bounded by
network parameters. This upper bound, along with adaptation of convergence analysis
in~\cite{asusubgrad}, readily characterizes the step-size and number of
iterations for RELAXED-ABS; this is unlike arbitrary
convex programs where the convergence results are in terms of a generic {\em
slater vector}~\cite{asusubgrad}.
\end{remark}

\normalclr

%The above shows that by making $\gamma$ small enough and $T$ large enough, the dual
%objective and primal objective both converge to optimal. Furthermore, the solution
%also converges to a feasible one for large enough value of $\gamma
%T$~\cite{asusubgrad}. 

%Proposition~\ref{THM:DUALCONV}  can be used to set step-sizes and number of iterations as follows.

\begin{remark}
({\sc Step-size and Number of Iterations}.)
\label{rem:ss}
The step-size and number of iterations can be set based on the two following
principles:
\begin{enumerate}
\item 
Suppose we want the per-UE objective to
deviate from the optimal by no more than $\epsilon$. Then,
Proposition~\ref{THM:DUALCONV} can be used to set $\gamma$ and $T$ as follows:
\begin{align}
\smfrac{\gamma Q^2}{2}\leq \smfrac{N\epsilon}{2}\ 
\ \text{and}\ \
\smfrac{B^2}{2\gamma T}\leq \smfrac{N\epsilon}{2}\ .
\label{eqn:ssrule1}
\end{align}
The above imply,
\begin{equation}
\gamma = \smfrac{N\epsilon}{Q^2}
\ \text{and}\ \
T=(\smfrac{QB}{N\epsilon})^2\ .
\label{eqn:ssrule2}
\end{equation}
Since the maximum interferers $I_{max}$ is typically a small number, 
a moments reflection shows that $\gamma = O(\epsilon/N_{sf}^2 r_{max}^2)$
and $T=O(w_{max}^2 U_{max}/\epsilon^2 r_{min})$ where $w_{max}$ is the maximum vaue
of $w_u$. In other words, the number of
iterations simply depends on the maximum UEs in any cell and not on the overall
number of UEs.

\item The number of iterations required can be significantly reduced using the
following observation. Suppose the macro-pico interference graph can be decomposed into
several disjoint components. In that case, we can run the RELAXED-ABS algorithm for
each component independently (possibly parallaly). For each interference graph
component, we can use the step-size and iteration rule prescribed in the previous
paragraph.
\end{enumerate}
\end{remark}

%Note that the above results says something stronger: at all times the utility of the
%rates averaged over all iterations stay close to the optimal utility. However, the
%``averaged" (over all iterations) primal variables are not necessarily feasible at
%all times, the feasibility of the primal variables happen only when the dual cost
%approaches the dual optimal after running the algorithm for sufficiently long
%duration.

\vspace{-0.15in}
\section{Integer Rounding of RELAXED-ABS}
\label{sec:algo3}

In this section, we show how solution to RELAXED-ABS can be converted to a feasible
solution for the original problem OPT-ABS. There are two challenges in performing
this step.  Firstly, in OPT-ABS, each UE can receive resources either from a macro
or a pico but not both unlike RELAXED-ABS.  Secondly, as with all dual based
sub-gradient algorithms, after $T$ iterations of running RELAXED-ABS, the solution
may violate feasibility, {\em albeit} by a small margin~\cite{asusubgrad}. Thus, we
need to associate each UE with a macro or a pico and round the values of $N_m$'s
and $A_p$'s so that the overall solution is feasible and has provable performance
guarantee.

To this end, we first introduce the following rounding function:
\begin{equation}
\text{Rnd}_{N_{sf}}(x)=\left\{
\begin{array}{ll}
\lfloor x \rfloor &\ ,\ x\geq \smfrac{N_{sf}}{2}\\
\lceil x \rceil &\ ,\ x < \smfrac{N_{sf}}{2}
\end{array}
\right.
\label{eqn:rf}
\end{equation}
The rounding algorithm is formally described in Algorithm~\ref{algo:abs-round}.

\begin{algorithm}[tbph]
\caption{\textsc{Round RELAXED-ABS}:
Algorithm for Integer Rounding of Output of Algorithm~\ref{algo:abs-relax}
\label{algo:abs-round}}
{\small
\begin{algorithmic}[1]

\STATE {\em User Association}: For all $u\in\mU$, perform the following steps:
\begin{enumerate}
\item Compute the throughput $u$ gets from macro
and pico in RELAXED-ABS solution as follows:
\begin{align*}
R_u^{macro} & =r_u^{macro}\hat{x}_u \\
R_u^{pico} & = r_u^{pico,ABS}{\hat{y}}_u^A + r_u^{pico}{\hat{y}}_u^{nA}\ ,
\end{align*}
where, ${\hat{x}}_u, {\hat{y}}_u^A, {\hat{y}}_u^{nA}$ are the out
of Algorithm~\ref{algo:abs-relax}.

\item If {$R_u^{macro}>R_u^{pico}$}, UE-$u$ associates with the macro, else
it with pico.
\end{enumerate}
Define and compute ${\mU}_m^*$, the set of UEs
associated with macro-$m$ after the {\em UE association} step. Similarly define
and compute ${\mU}_p^*$ for every pico-$p$. 

\STATE {\em ABS Rounding:}  Compute integral $N_m^*$'s and $A_p^*$'s as follows:
\begin{align*}
N_m^*&
=\text{Rnd}_{N_{sf}}({\hat{N}}_m)
\ \ \ \ \forall m \in\mM \\
A_p^*&
=\text{Rnd}_{N_{sf}}({\hat{A}}_p)
\ \ \ \ \forall p \in\mP
\end{align*}
where ${\hat{N}}_m$ and ${\hat{A}}_p$ denote the output of
Algorithm~\ref{algo:abs-relax}.

\STATE{\em Throughput computation:}
For each macro-$m$, for all $u\in {\mU}_m^*$, the final value of $x_u^*, R_u^*$ are
\begin{equation}
x_u^* = \smfrac{{\hat{x}}_u N_m^*}{X_m}\ \text{and}\ R_u^*=r_u^{macro}x_u^*\ .
\label{eqn:xround}
\end{equation}
where $X_m=\sum_{u\in\mU_m^*} {\hat{x}}_u\ .$ 

For every pico, compute the ABS-utilization $Y_p^A$ and
non-ABS utilization $Y_p^{nA}$ as
$$Y_p^A=\sum_{u\in \mU_p^*}{\hat{y}}_u^A\ ,\ 
Y_p^{nA}=\sum_{u\in \mU_p^*}{\hat{y}}_u^{nA}\ .$$
Next, for each pico-$p$, for all $u\in {\mU}_m^*$, the final values of
${y_u^A}*,{y_u^{nA}*},R_u^*$
are
\begin{align}
{y_u^A}^* &= \smfrac{{\hat{y}}_u^A A_p^*}{Y_p^A}\ ,\ 
{y_u^{nA}}^* = \smfrac{{\hat{y}}_u^{nA} (N_{sf}-A_p^*)}{Y_p^{nA}}\
\label{eqn:yround}\\
R_u^*\ \  &= r_u^{pico,ABS}{{y}_u^A}^* + r_u^{pico}{{y}_u^{nA}}^*
\label{eqn:rpico}
\end{align}

The system utility is computed as $\text{Util}({\mathbf R}^*)=\sum_u w_u\ln
R_u^*.$

\end{algorithmic}
}
\end{algorithm}

The algorithm has
three high-levels steps. In the {\em UE association} step, each UE who gets
higher throughput from a macro in the solution of ABS-RELAXED is associated with a
macro, and, each UE who gets higher throughout from a pico gets associated with a
pico. In the next step called {\em ABS rounding}, the UE association decisions are
used to obtain the ABS and non-ABS subframes. Indeed, this step produces a feasible
$A_p$'s and $N_m$'s as we show later in our result. Finally, in the throughput
computation step, each UE's available average airtime is scaled to fill-up the
available sub-frames. The throughput of each UE can be computed from this.

{\em Performance guarantees:} The worst case performance guarantee of the output
produced by Algorithm~\ref{algo:abs-round} depends on the number of iterations and
step-size used for running Algorithm~\ref{algo:abs-relax} prior to running
Algorithm~\ref{algo:abs-round}. This is shown by the following result.
Let $R_u^*$ be the throughput computed by Algorithm~\ref{algo:abs-round} and let
$R_u^{opt}$ be the optimal throughput. 

\begin{proposition}
\label{THM:APX}
Algorithm~\ref{algo:abs-round} produces feasible output to the problem OPT-ABS.
Furthermore, for any given $\delta>0$, there exists $T$ large enough (but, polynomial
in the problem parameters and $1/\delta$) and $\gamma$ satisfying~(\ref{eqn:ssrule1})
such that, if we apply Algorithm~\ref{algo:abs-round} to the output of
Algorithm~\ref{algo:abs-relax} with this $T,\gamma$, then
\[\text{Util}(2(1+\delta){\mathbf R}^*)\geq \text{Util}({\mathbf R}^{opt})\]
\end{proposition}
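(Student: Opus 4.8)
The plan is to establish the two claims of Proposition~\ref{THM:APX} separately: first that Algorithm~\ref{algo:abs-round} always produces a feasible solution to OPT-ABS, and second the approximation bound $\text{Util}(2(1+\delta){\mathbf R}^*)\geq \text{Util}({\mathbf R}^{opt})$. For feasibility, the key observations are structural: the {\em UE association} step forces each UE onto exactly one of macro or pico, so the association constraint~(\ref{eqn:assoc}) holds by construction; the scaling in~(\ref{eqn:xround}) makes $\sum_{u\in\mU_m^*}x_u^* = N_m^*$ exactly, and similarly~(\ref{eqn:yround}) makes the ABS and non-ABS airtimes from each pico sum to $A_p^*$ and $N_{sf}-A_p^*$ respectively, so constraints~(\ref{eqn:xu})--(\ref{eqn:yunA}) hold with the rounded $N_m^*,A_p^*$. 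The only nontrivial point is the interference constraint~(\ref{eqn:intcon}): we need $A_p^* + N_m^* \leq N_{sf}$ for all $m\in\mI_p$. Here the rounding function $\text{Rnd}_{N_{sf}}$ is designed precisely so that rounding never increases the sum of two quantities whose (real) sum is at most $N_{sf}$ by more than... actually, we must be careful — if $\hat A_p + \hat N_m \leq N_{sf}$ with both terms possibly rounded up, the sum could increase. The argument must exploit that at most one of $\hat A_p, \hat N_m$ can be below $N_{sf}/2$ when their sum is at most $N_{sf}$, hence at most one gets rounded up while the other gets rounded down, and one checks $\lceil a\rceil + \lfloor b\rfloor \leq a + b + 1 \leq \ldots$; one shows the feasibility margin is preserved, possibly using that $\hat A_p,\hat N_m$ are themselves only approximately feasible after $T$ iterations and absorbing the small subgradient infeasibility. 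This case analysis on the rounding function is where the feasibility part needs genuine care.

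For the approximation bound, the plan proceeds in three stages. First, by Proposition~\ref{THM:DUALCONV}(ii) and the step-size rule~(\ref{eqn:ssrule1}), the averaged RELAXED-ABS output $\hat{\mathbf R}_T$ satisfies $\text{Util}(\hat{\mathbf R}_T) \geq \text{Util}({\mathbf R}^{RELAXED-OPT}) - N\epsilon/2 \geq \text{Util}({\mathbf R}^{opt}) - N\epsilon/2$, since RELAXED-ABS is a relaxation of OPT-ABS and so its optimum dominates. Choosing $\epsilon$ small (which forces $T$ polynomially large via~(\ref{eqn:ssrule2})) makes this additive loss at most, say, $N\ln(1+\delta)$, i.e.\ $\text{Util}((1+\delta)\hat{\mathbf R}_T) \geq \text{Util}({\mathbf R}^{opt})$ after absorbing the additive slack as a multiplicative $(1+\delta)$ factor on throughputs. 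Second — the crux — I must show that the rounding step loses at most a factor of $2$ per UE in throughput: $R_u^* \geq \tfrac12 \hat R_u$ (up to the $(1+\delta)$ already accounted for, or folded in here). For a UE associated to a macro, $R_u^{macro} \geq \tfrac12(R_u^{macro}+R_u^{pico}) \geq \tfrac12 \hat R_u$ by the association rule, and then the scaling $x_u^* = \hat x_u N_m^*/X_m$ must be shown not to decrease $R_u^{macro}$: since $X_m = \sum_{u\in\mU_m^*}\hat x_u \leq \hat N_m$ (as $\mU_m^* \subseteq \mU_m$ and RELAXED-ABS is near-feasible) while $N_m^* = \text{Rnd}_{N_{sf}}(\hat N_m) \geq \hat N_m - 1 \geq (1-o(1))\hat N_m$, so $N_m^*/X_m \geq 1 - o(1)$, giving $R_u^* \geq (1-o(1))R_u^{macro} \geq (1-o(1))\tfrac12\hat R_u$. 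The pico case is analogous but needs both $A_p^*/Y_p^A$ and $(N_{sf}-A_p^*)/Y_p^{nA}$ bounded below, again using $Y_p^A \leq \hat A_p$, $Y_p^{nA} \leq N_{sf}-\hat A_p$ (the latter from~(\ref{eqn:yunA})) together with the rounding guarantee; one subtlety is when $Y_p^A$ or $Y_p^{nA}$ is zero, handled by convention. Third, combining: $R_u^* \geq \tfrac{1}{2(1+\delta')}\hat R_u$ for a slightly adjusted $\delta'$, hence $R_u^{opt} \leq (1+\delta')\hat R_u \leq 2(1+\delta')(1+\delta')R_u^*$; reparametrizing $\delta$ absorbs the squared factor, and applying $\text{Util}(\cdot)=\sum_u w_u\ln(\cdot)$ (which is monotone coordinatewise and turns scaling into additive constants) yields $\text{Util}({\mathbf R}^{opt}) \leq \text{Util}(2(1+\delta){\mathbf R}^*)$.

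The main obstacle I anticipate is the bookkeeping of the two separate $(1+\delta)$-type slacks — one from the finite number of subgradient iterations (Proposition~\ref{THM:DUALCONV}), one from the $N_m^* \geq \hat N_m - 1$ rounding relative to $X_m$ — and showing both can be driven down simultaneously by taking $T$ polynomially large, so that the product of all error factors is at most $2(1+\delta)$. The rounding-induced factor $N_m^*/X_m$ is only close to $1$ when $\hat N_m$ is not too small; if $\hat N_m = O(1)$ one needs a separate argument (e.g.\ that then the macro's contribution to utility is correspondingly limited, or that $\text{Rnd}_{N_{sf}}$ rounds up in exactly that regime, $\hat N_m < N_{sf}/2$, giving $N_m^* = \lceil \hat N_m \rceil \geq \hat N_m$ outright so no loss at all). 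Indeed I expect the rounding function's two-case definition to be exactly what makes the worst case behave: rounding {\em up} small values (no multiplicative loss) and rounding {\em down} large values (small relative loss), so the factor is always at least $1 - 1/\lceil N_{sf}/2\rceil$, a constant that the $(1+\delta)$ slack comfortably covers. Verifying that this dovetails with the near-feasibility of the RELAXED-ABS output (which may itself violate $\sum x_u \leq N_m$ by a subgradient-sized amount) is the delicate technical point, and is presumably where the "mild technical assumptions" of Proposition~\ref{THM:DUALCONV} and the detailed estimates in~\cite{eiciclonger} are invoked.
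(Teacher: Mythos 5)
First, a caveat: the paper does not actually prove Proposition~\ref{THM:APX} in this text --- the proof is deferred to the longer version~\cite{eiciclonger} --- so a line-by-line comparison is impossible. Your skeleton (feasibility by construction of the scaling, factor~$2$ from $\max(R_u^{macro},R_u^{pico})\geq\frac12(R_u^{macro}+R_u^{pico})$, and the $(1+\delta)$ slack from Proposition~\ref{THM:DUALCONV}(ii) plus the fact that RELAXED-ABS upper-bounds OPT-ABS) is almost certainly the intended architecture. However, two points you flag but do not resolve are genuine gaps, not bookkeeping.

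\textbf{Feasibility of the interference constraint.} Your case analysis of $\text{Rnd}_{N_{sf}}$ correctly shows that a \emph{feasible} pair $(\hat A_p,\hat N_m)$ with $\hat A_p+\hat N_m\leq N_{sf}$ rounds to a feasible integer pair. But the iterates of Algorithm~\ref{algo:abs-relax} are only constrained to lie in $\bPi$, which does \emph{not} contain the constraint~(\ref{eqn:intcon}); in a given iteration both indicators in~(\ref{eqn:mpd1}) and~(\ref{eqn:ppd1}) can fire, giving $A_p(t)+N_m(t)=2N_{sf}$, and the time-average $\hat A_p+\hat N_m$ can therefore exceed $N_{sf}$ by a subgradient-sized amount. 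The rounding function does not repair this: e.g.\ $\hat A_p=21.5$, $\hat N_m=19.6$ gives $\lfloor 21.5\rfloor+\lceil 19.6\rceil=41>N_{sf}=40$. So the unconditional claim ``Algorithm~\ref{algo:abs-round} produces feasible output'' does not follow from what you wrote; you need either an explicit post-hoc cap $A_p^*\leftarrow\min_{m\in\mI_p}(N_{sf}-N_m^*)$ (with a corresponding utility accounting), or a proof that the constraint violation of the averaged iterate is strictly less than the slack the rounding can tolerate --- neither of which is supplied.

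\textbf{The ``for any $\delta>0$'' claim versus the rounding loss.} In the branch $\hat N_m\geq N_{sf}/2$ you get $N_m^*/X_m\geq(\hat N_m-1)/\hat N_m\geq 1-2/N_{sf}$, and you assert the $(1+\delta)$ slack ``comfortably covers'' this. It does not: $2/N_{sf}$ is a \emph{fixed} constant ($5\%$ for $N_{sf}=40$) that does not shrink as $T\to\infty$, whereas the proposition promises the bound for \emph{every} $\delta>0$, including $\delta\ll 2/N_{sf}$. Your per-UE multiplicative accounting therefore only yields $\text{Util}\bigl(2(1+c/N_{sf})(1+\delta'){\mathbf R}^*\bigr)\geq\text{Util}({\mathbf R}^{opt})$, which is weaker than the stated result for small $\delta$. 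Closing this requires an additional idea --- for instance showing that when $N_m^*=\lfloor\hat N_m\rfloor<X_m$ the departed pico UEs leave enough slack in $X_m$ to compensate, or folding the $O(W_m/N_{sf})$ additive utility loss into the $\ln 2$ budget by a sharper association argument --- and is presumably where the detailed estimates of~\cite{eiciclonger} do real work.
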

\begin{proof}
See~\cite{eiciclonger}.% Appendix~\ref{app:approxproof} for an outline.
\end{proof}

\begin{remark}
Proposition~\ref{THM:APX} shows that, 
for sufficiently large but polynomial number of iterations, the worst
case approximation factor is close to~2. It is important to realize that,
as with all NP-hard problems, this is simply a worst case result.
Our evaluation with several real topologies suggest that the performance of our
algorithm is typically within $90\%$ of the optimal. Also, in practice, we recommend
using the step-size and number of iterations given by~(\ref{eqn:ssrule2}).
\end{remark}

%The question is, how good is the above solution?

\vspace{-0.15in}
\section{Computing Cell Selection Bias and ABS Patterns}
\label{sec:csbabspattern}

In this section, we describe two important computations that are relevant for
realization of eICIC, namely, Cell Selection Bias (CSB) based UE association and
converting ABS numbers into ABS patterns.

\vspace{-0.15in}
\subsection{Cell Selection Bias for UE Association}
\label{sec:csb}
Our solution so far solves the coupled problem of optimizing ABS sub-frames and UE
association. However, UE association in LTE HetNets need to be standard compliant.
While standards on this are evolving, one proposed methodology by the LTE standard is a 
rule based on {\em cell selection
bias}~\cite{eicic-intro}. Precisely,
if $b_c$ is the cell-selection bias of cell-$c$ (which could be a macro or apico),
then UE-$u$ associates with cell $c_a$ such that,
\begin{equation}
c_a = \max_c [b_c\ +\ RSRP_{u,c}]\ ,
\label{eqn:csbrule}
\end{equation} 
where $RSRP_{u,c}$ is the received RSRP of cell-$c$ at UE-$u$. The choice of $c_a$ is a
design issue. We wish to choose values of $b_c$ so that UE association based on rule
given by~(\ref{eqn:csbrule}) leads to association decisions derived in the previous
section by our optimization algorithm. 

Obtaining cell specific biases that precisely achieves a desired association may not always be
feasible. Thus, we propose to compute biases so that the ``association error" (as
compared to optimal association) is minimized. We do this using the following steps.

\begin{enumerate}

\item Since the relative bias between picos and macros matter, set all biases of
macros to zero.

\item Let $C_{p,m}$ be the set of UEs who have pico-$p$ as
the best candidate pico, and also, macro-$m$ as the best candidate macro, i.e., $C_{p,m}={\mU}_p\cap
{\mU}_m$.
From the UEs in the set $C_{p,m}$, let $W_{p,m}^*$ be the
total weight of UE's associated to pico-$p$ under UE association produced by our algorithm
in the previous section. Also from UEs in $C_{p,m}$, as a function 
of bias $b$,  let $W_{p,m}(b)$ be the total weight of UEs that would associate with
pico-$p$ if the bias of pico-$p$ were set to $b$. In other words, $W_{p,m}=\sum_{u\in
D_{p,m}(b)}w_u$ where
\[ D_{p,m}(b) = \{u\in C_{p,m}: RSRP_{u,p} + b\geq RSRP_{u,m} \}\ , \]
where $RSRP_{u,m}$ and $RSRP_{u,p}$ are received power (in dBm) of reference signal at
UE-$u$ from best candidate macro and best candidate pico respectively.
This step computes $W_{p,m}^*$ and $W_{p,m}(b)$ for every interfering pico-macro
pair $(p,m0$ and every permissible bias value. 

\item For every pico-$p$, cell selection bias $b_p$ is set as
\begin{equation}
\label{eqn:csbmmse}
b_p =\arg\min_b\left[ \sum_{m\in {\mI}_p}{\card{W_{p,m}(b)-W_{p,m}^*}}^2\right] \ .
\end{equation}
Thus, the bias values are chosen as the one that minimizes the mean square error of
the association vector of number of UEs to different picos.

\end{enumerate}

\begin{remark}
({\sc Maximum and minimum bias constraint}.)
In many scenarios, operators that deploy picos may desire to have a maximum or
minimum bias for a pico-$p$ (say, $b_{p,max}$ and $b_{p,min}$). For example, if a pico is
deployed to fill a coverage hole or high demand area, then the $b_{min}$ should be such that UEs around
the coverage hole get associated with the pico. This can be handled using the
following steps: 

\begin{enumerate}

\item  First run the joint UE-association and ABS-determination algorithm
(Algorithm~\ref{algo:abs-relax} and Algorithm~\ref{algo:abs-round}) by setting
$r_u^{macro}=0$ for all UEs that get associated with the pico even with minimum
bias, and $r_u^{pico}=0$ for all UEs that do not get associated with the pico
even with maximum bias. 

\item Next execute the 3~steps for cell-bias determination described in this section
but with the minor modification in~(\ref{eqn:csbmmse}) so that the $\arg\min$
operation is restricted to $b\in[b_{p,min},b_{p,max}]$.

\end{enumerate}
\end{remark}

\vspace{-0.15in}
\subsection{Converting ABS numbers into ABS patterns} 

In the previous sections, we have
provided techniques to compute number of ABS subframes for every $N_{sf}$ subframes
(i.e., ABS subframes per ABS-period). In practice, we also need to specify the exact
subframes in an ABS-period that are used as ABS subframes. The ABS number can be
converted into a pattern as follows: 

\begin{enumerate}

\item Index the subframes in an ABS-period in a {\em consistent} manner across all macros
and picos.

\item Suppose a macro-$m$ leaves out $k$ out of $N_{sf}$ subframes as ABS subframes.
Then macro-$m$ offers the {\em first-$k$} subframes as ABS subframes 
where the {\em first-$k$}  relates to the indexing in the previous step.

\end{enumerate}

 This simple scheme works provided all macros have the same set  of permissible
sub-frames if required (i.e., there is no restriction on certain macro that it cannot
offer certain subframes as ABS).  Notice that, since a pico can effectively use the
least number of ABS offered by interfering macros, this scheme would naturally ensure
a provably correct mapping between number of ABS and ABS-pattern.

\vspace{-0.15in}
\section{Evaluation using RF Plan from a Real Network }
\label{sec:eval}

\changeclr
We evaluated our algorithms using RF plan from a real network deployment by a 
popular operator in Manhattan, New York City.
The goal of our evaluation is four folds. First, to compare
our proposed eICIC algorithm with other alternative schemes. Second, to understand
the optimality gap of our algorithm because we have shown that optimizing eICIC
parameters is NP-hard. Third, to understand the benefits offered by eICIC because
operators are still debating whether the additional complexity of eICIC is worth the
gains. Fourth, we show some preliminary results on how eICIC gains vary with pico
transmit power and UE density.

\normalclr

\vspace{-0.15in}
\subsection{Evaluation Framework}

{\bf Topology:} \changeclr We used an operational LTE network deployment by a leading operator
in New York City to generate signal propagation maps by plugging in the tower
and terrain information along with drive-test data into a commercially available RF
tool that is used by operators for cellular planing~\cite{9955tool}.  In
Figure~\ref{fig:nyctop}, we show the propagation map of the part of the city that we used for
evaluation, along with the macro-pico interference graph for nominal pico transmit
power of 4W. \normalclr The RF plan provides path loss estimates from actual macro location to
different parts of the city.  For the purpose of this study, we selected an area of
around 8.9~$\text{km}^2$ in the central business district of the city. This part of
the city has a very high density of macro eNB's due to high volume of mobile
data-traffic.  The macro eNB's are shown in blue color with sectorized antennas and
these eNB's are currently operational.  While macro cells used in our evaluation are
from the existing network, LTE pico cells are yet to be deployed in reality. Thus the
pico locations were manually embedded into the network planning tool. We carefully
chose 10~challenging locations for our picos: some are chosen with locations with
poor macro signals, some pico locations are chosen with high density of interfering
macros, some are chosen to coincide with traffic intensity hotspots, and one pico is
also deeply embedded into a macro-cell. The picos are shown in red circle with
omnidirectional antennas. 

All eNB's support $2\times 2$ MIMO transmissions. UEs have $2\times 2$ MIMO MMSE-receivers
that are also equipped with {\em interference cancellation} (IC) capability to cancel out broadcast
signals from macro during pico downlink transmissions over ABS subframes.

\begin{figure}[t]
\begin{center}
\subfigure{
\includegraphics[height=1.4in,width=1.6in]{./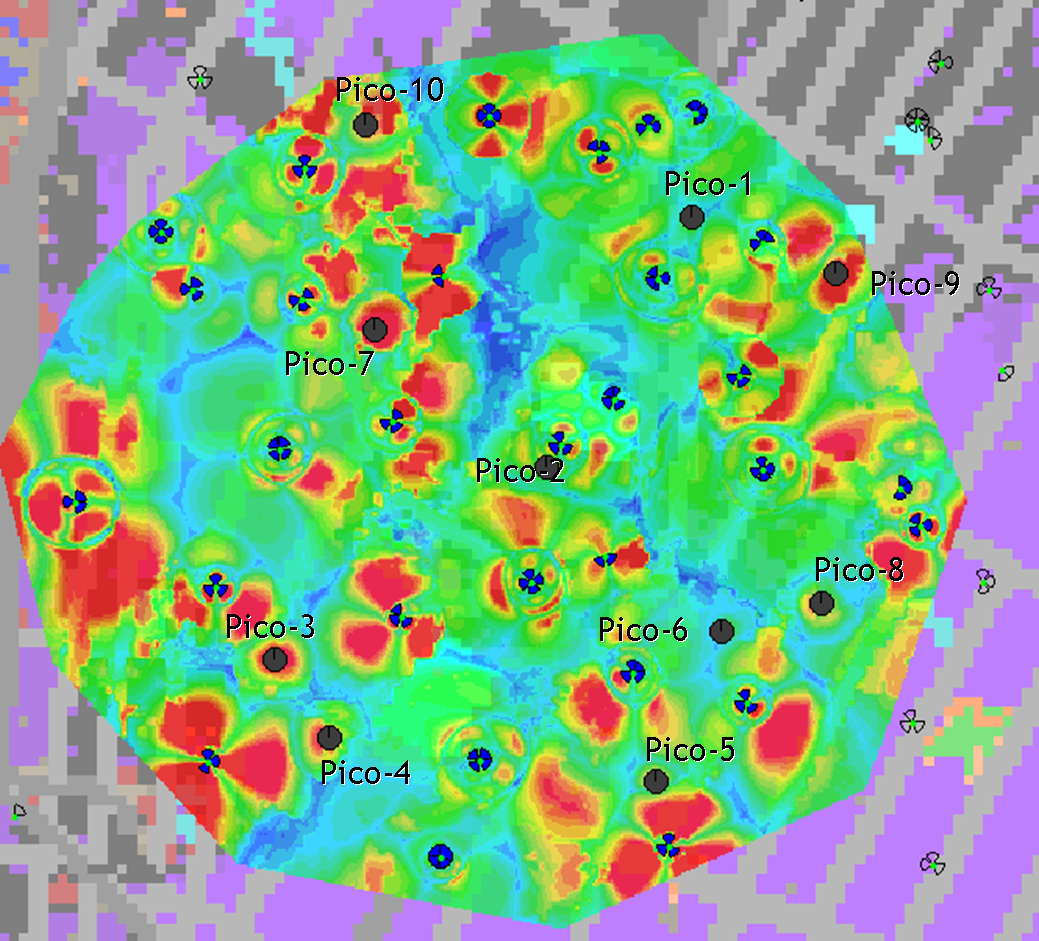}
}
\subfigure{
\includegraphics[height=1.4in,width=1.6in]{./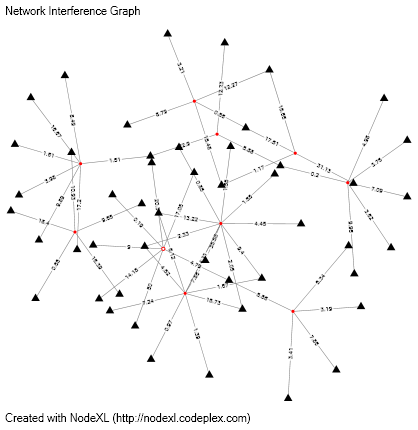}
}
\vspace{-0.1in}
\caption{\label{fig:nyctop}
Propagation map of the evaluated LTE network in New York City and the associated macro-pico interference graph.
The blue eNB's are the macros and are currently
operational. The grey eNB's are low power picos and we manually placed them in the
tool. Pico-10, Pico-3, Pico-5, Pico-9 have traffic hotspots around them.}
\end{center}
\vspace{-0.3in}
\end{figure}

{\bf Important Cell Parameters:} \changeclr The macro eNB's have transmit power of 45~dBm (31~W). For the picos,
we evaluated with 3 different settings of transmit powers: 36~dBm (4~W), 30~dBm (1~W), and
27~dBm (500~mW). \normalclr The bandwidth is 10~MHz in the
700~MHz LTE band. The pico heights are chosen as typically 30~ft above the ground
and macro heights are variable based on actual deployment and are typically much higher
(more than 100~ft in many instances.). We choose $N_{sf}=40$ so that we can obtain
the number of ABS offerings in every 40~subframes. Also, we allow a maximum bias of 15~dB for 
any pico because this is a typical restriction in current networks. 

\changeclr {\bf Traffic:} While the macros and the propagation map used in our
evaluation is for a real network, we create synthetic UE locations for our evaluation
because LTE pico deployments are still not very prevalent. This is done as follows.
In the area under consideration, we chose a nominal UE density of around
450~active~UEs/sq-km (dense urban density).  In addition, we created UE hotspots
around Pico-10, Pico-3, Pico-5, and Pico-9. The hotspots around Pico-3, Pico-5,
Pico-9 have double the nominal UE density and the traffic hotspot around Pico-10 has
50\% more UE density than nominal. We also performed evaluation by varying the UE
density around the macro cells to 225~active~UEs/sq-km (urban density) and
125~UEs/sq-km (sub-urban density) without altering the hotspot UE densities around
the selected picos. As we discuss in Section~\ref{sec:son}, in practice, network
measurements would be available in terms of average traffic load and SINR
distribution from which the UE locations can be sampled.

{\bf Methodology:} The radio network planning tool (RNP)~\cite{9955tool} and our
eICIC implementation were used to generate the results as follows.

%\begin{enumerate}

%\item 
1) RNP tool was used to generate signal propagation matrix in every pixel in the area of interest
in New York City as shown in Figure~\ref{fig:nyctop}.

%\item 
2) The RNP tool was then used to drop thousands of UEs in
 several locations based on the aforementioned UE density profiles. All UEs had unit
weights (as in weighted proportional fair).

%\item 
3) Based on the signal propagation matrix and UE locations, we invoked the
built-in simulation capability of RNP tool to generate the macro-pico interference
graph and the following donwlink SINR's for every UE: best-macro SINR, best-pico ABS
SINR, best-pico non-ABS SINR. These SINR values were converted to physical layer
rates using LTE look-up table. This step essentially produces the complete set of inputs for
OPT-ABS problem as follows.  

%\item 
4) Then this input-data was fed into our implementation of proposed eICIC
and other comparative schemes described in Section~\ref{sec:otherschemes}.

%\end{enumerate}

Thus we used RNP too to generate synthetic input that is
representative of SINR and path-losses in a live network.

\vspace{-0.15in}
\subsection{Comparative eICIC Schemes}
\label{sec:otherschemes}

The three schemes we compare are as follows.
%\begin{enumerate}

%\item 

1) {\bf Proposed eICIC:} This is the proposed algorithm developed in this paper.
Just to summarize, we first apply Algorithm~\ref{algo:abs-relax} and the
rounding scheme in Algorithm~\ref{algo:abs-round} and finally we use the technique
described in Section~\ref{sec:csb} for obtaining CSB's.

%\item 
2) {\bf Fixed eICIC Pattern:} Another option is to use a fixed or uniform eICIC pattern
across the entire network. In~\cite{perf-eicic12}, the authors have performed
evaluation with fixed eICIC patterns. Also~\cite{sim-eicic} considers fixed eICIC
parameters. We also compare our proposed eICIC algorithm to the following four (ABS,
CSB) combinations: (5/40, 5~dB), (10/40, 7.5~dB), (15/40, 10~dB), (15/40, 15~dB). The
fixed patterns represent the range of eICIC parameters 
considered in the literature.

%\item 
3) {\bf Local Optimal Heuristic:} This is a local optimal based heuristic that is
very easy to implement and is also amenable to distributed implementation. This
scheme works as follows. First, each pico sets individual biases to maximize the total
improvement (as compared to zero-bias) of physical layer rates (by considering the
ABS rates) of all UEs within the coverage range of the pico. This step readily
provides the set of UEs that associate with picos. In the next step, each macro $m$
obtains the fraction (say, $a_{m}$) of UEs within its coverage range that
associates with itself and then the macro offers $\lceil N_{sf}(1-a_{m})\rceil$
as ABS sub-frames. Each pico can only use minimum number of ABS sub-frames offered by
its interfering macros.

%\end{enumerate}

\vspace{-0.15in}
\subsection{Results} 
For our results, we consider all UEs in the coverage area of deployed picos and all macros 
that interfere with any of these picos. Clearly, these are the only UEs that are affected by
eICIC or picos.

\begin{figure}[t]
\begin{center}
\includegraphics[height=1.25in,width=1.80in]{./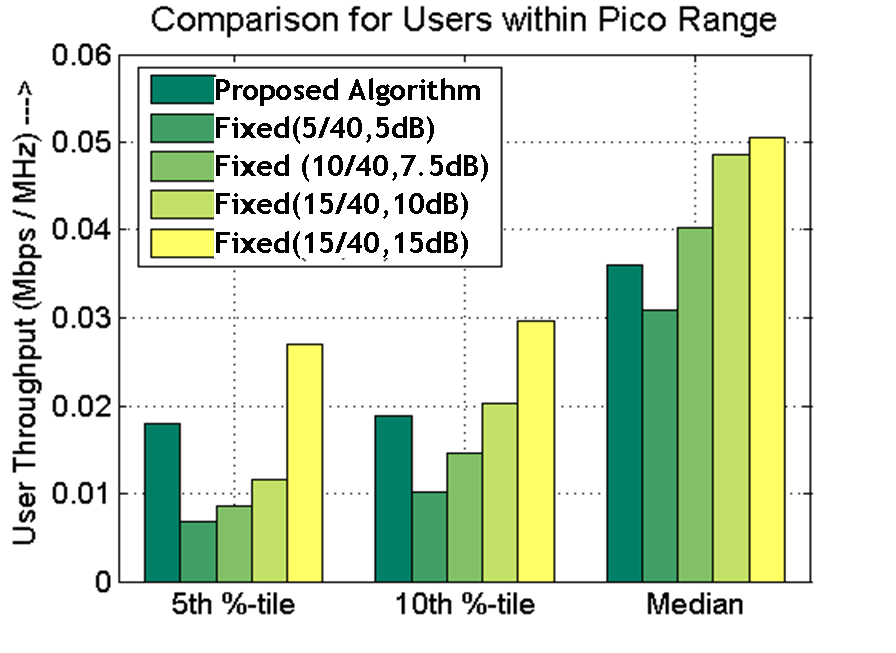}
\hspace{-0.20in}
\includegraphics[height=1.25in,width=1.80in]{./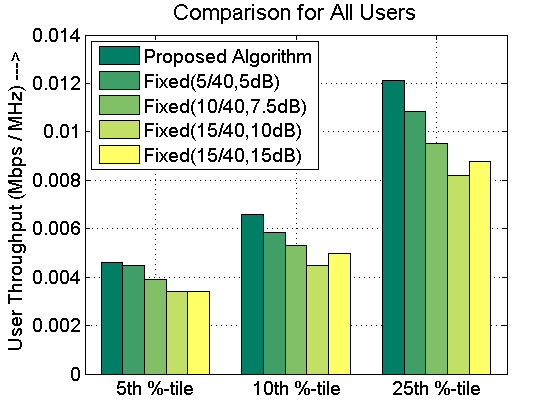}
\end{center}
\vspace{-0.15in}
\caption{\label{fig:compfixed}
\changeclr
Comparison of proposed eICIC with fixed eICIC pattern for
$5^{th}, 10^{th}, 50^{th}$ percentile of UE-throughput. Plots
are for pico transmit power of 4~W.\normalclr}
\vspace{-0.15in}
\end{figure}

\begin{figure}[t]
\begin{center}
\includegraphics[height=1.25in,width=1.80in]{./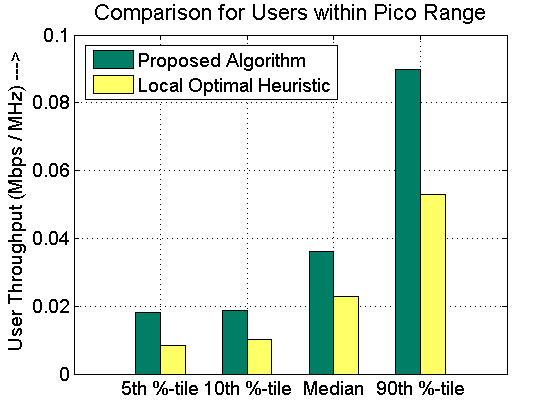}
\hspace{-0.20in}
\includegraphics[height=1.25in,width=1.80in]{./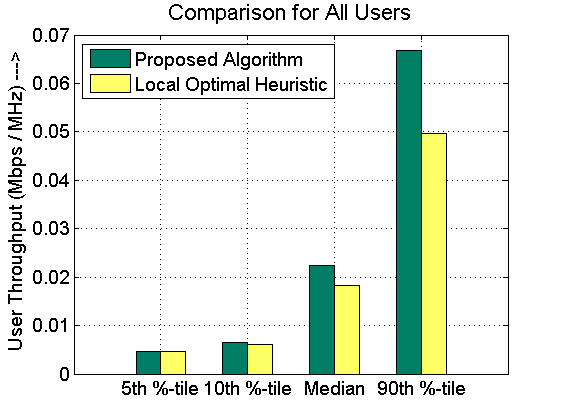}
\end{center}
\vspace{-0.15in}
\caption{\label{fig:complocal}
\changeclr Comparison of proposed eICIC with the local optimal heuristic for
$5^{th}, 10^{th}, 50^{th}, 90^{th}$ percentile of UE-throughput. The plots
are with pico transmit power of 4~W. \normalclr}
\vspace{-0.25in}
\end{figure}

{\bf Comparison with other schemes:} In Figure~\ref{fig:compfixed}, we compare our
algorithm to different network wide fixed ABS settings. The interesting comparison is
between our proposed eICIC algorithm and fixed (ABS, CSB) setting of (15/40, 15~dB)
which corresponds to a CSB value of maximum possible 15~dB; all other fixed eICIC
schemes perform poorly. This fixed eICIC setting of  (15/40, 15~dB) appears to
perform better than our scheme for all UEs in the pico footprint area because it
associates all UEs in the pico footprint area to the pico; whereas, our scheme does
not necessarily associate all UEs in the pico footprint area to the pico. However,
the fixed eICIC scheme fails to account for the overall network performance as the
macro UEs have to sacrifice far greater (compared to our scheme) throughput due to
eICIC.  Indeed, the throughput percentiles of all UEs in the system, for any fixed
eICIC scheme, is reduced compared to our algorithm as we can see from the plot in the
bottom panel of Fugure~\ref{fig:compfixed}. For example, our proposed eICIC improves
the $5^{th}, 10^{th}, 25^{th}$ percentile throughput of Fixed-(15/40, 15~dB) eICIC
configuration by $30\--40\%$.  For specific macros that do not interfere with any
hotspot picos, this improvement is more than 50\%.  In typical deployments where
many macros may have few or no pico neighbors in the macro-pico interference graph
unlike our evaluation topology, the loss of the overall system performance could be
more pronounced due to fixed eICIC configuration. Also, finding a good but fixed
eICIC setting could also be challenging. Table~\ref{tab:computil} also shows that the
overall log-utilty of the system is better with our proposed scheme compared to the
fixed eICIC schemes.

In Figure~\ref{fig:complocal}, we compare the proposed the eICIC with the local
optimal heuristic described in Section~\ref{sec:otherschemes}. Our scheme outperforms
the local optimal heuristic by a margin of more than $80\%$ for UEs in
pico-footprint area; furthermore, the overall systems performance is better with our
scheme as can be seen from the plot in the bottom panel of Figure~\ref{fig:complocal}
and Table~\ref{tab:computil}. However, the local optimal heuristic is very easy to
implement and could be promising with additional minor changes. We leave this as a
future topic of research.

\renewcommand{\tabcolsep}{5pt}
\begin{table}[t] \footnotesize
    \caption{\changeclr Comparison of total log-utility (total of logarithm
of UE throughputs in kbps/MHz) for different macro UE density and 
pico powers. {\bf DU, U, SU} stand for dense urban, urban, and sub-urban
UE density.\normalclr}
    \label{tab:computil}
\vspace{-0.2in}
\begin{center}
\begin{tabular}{| c | c | c | c | c | c | c |}
    \hline
(Macro  &  {Proposed} & {Local} & 
{ Fixed} & { Fixed} & { Fixed} &
{ Fixed}\\
Density, & eICIC & Opt & (5,5) & (10,7.5) & (15,10) & (15,15) \\
Pico  &&&&&& \\ 
Power) &&&&&&
\\ \hline
DU,4W &
5123.4 &  4799.4  & 4941.5 &  4886.3 &  4770.0 &  4837.1  
\\ \hline
DU,1W &
4984.0 &  4669.5  & 4786.1 &  4724.2 &  4609.8 &  4707.7  
\\ \hline
DU,$\smfrac{1}{2}$W &
4232.3 &  4018.6  & 4036.6 &  3976.7 &  3879.3 &  4001.3 
\\ \hline
U,4W &
3356.9 &  3154.2  & 3257.9 &  3227.9 &  3175.1 &  3212.6
\\ \hline
SU,4W &
2209.2 &  2032.9  & 2137.5 &  2124.5 &  2094.0 &  2123.8
\\ \hline
\end{tabular}
\end{center}
\vspace{-0.20in}
\end{table}

\normalclr
{\bf Optimality gap of our algorithm:} Since the solution to RELAXED-ABS is an upper
bound to the optimal solution of OPT-ABS, we obtain the optimality gap by comparing
our final solution to that produced by RELAXED-ABS ( Algorithm~\ref{algo:abs-relax}).
We compute $g$, such that our algorithm is within $100\times(1-g)\%$
of the optimal, as follows. Suppose
$R_u^{rel}$ and $R_u^{alg}$ be the UE-$u$'s throughputs produced by RELAXED-ABS and
our complete algorithm respectively. Then, we say that the optimality-gap is a factor
$g<1$ if $ \sum_u \ln(R_u^{alg}) \geq \sum_u \ln(R_u^{rel}(1-g)) $.  The smallest
value of $g$ that satisfies this can easily be computed.  \changeclr In Table~\ref{tab:compopt},
we show for various settings of macro UE densities and pico transmission power
that, our scheme is typically within $90\%$ of the optimal. \normalclr
\begin{table}[h] \footnotesize
    \caption{\changeclr Optimality gap of our algorithm
for different macro UE density and 
pico powers. {\bf DU, U, SU} stand for dense urban, urban, and sub-urban
UE density. \normalclr}
    \label{tab:compopt}
\vspace{-0.2in}
\begin{center}
\begin{tabular}{| c || c | c | c | c | c |}
\hline
(Macro density, & DU, 4W & DU, 1W & DU,$\smfrac{1}{2}$W & U, 4W & SU, 4W \\
Pico power) & & & & & 
\\ \hline
%&&&&& \\
\% of Optimal & 93.77\% &  95.64\% & 95.86\% & 92.98\% & 97.03\% \\
\hline
\end{tabular}
\end{center}
\vspace{-0.10in}
\end{table}

\changeclr
{\bf Benefits of eICIC:}  In the
interference graph, we have 26 macros and 10 picos. In typical deployments, there are
going to be many more picos and macros. Thus, to understand the gains that even a few
picos can offer, we show the following plots in Figure~\ref{fig:compcdf}: CDF of
throughputs of UEs in the pico coverage area, and CDF of throughput of UEs
outside of pico coverage area. Thus, we wish to understand the gains of UE who
could potentially associate with the picos, and the performance impact of UEs who
do not have the option of associating with picos. The plot in the top panel of
Figure~\ref{fig:compcdf} shows the throughput gains: {\em (i)} compared to no eICIC
based scheme the gains are more than 200\% for the far-edge UEs (say, $2.5^{th}$
percentile of the throughputs) and  40\--55\% for edge UEs ($5^{th}\--10^{th}$
percentile of UE throughput), {\em (ii)} also, compared to no pico, the gains are
even more dramatic and around $300\%$ even for  $5^{th}$ percentile of the
throughputs. The plot in the bottom panel of Figure~\ref{fig:compcdf} shows that the
throughput gains (over no eICIC based pico deployment) of pico UEs do not come at
an appreciable expense of macro UEs' throughput. In other words, though the macro
eNB's have fewer subframes for transmissions (due to ABS offered to picos) using
eICIC, this is compensated by the fact the macro UEs compete with fewer UEs (many
UEs end up associating with picos under eICIC). Thus, there are great benefits of
not only pico deployments, but also eICIC based pico deployments.
\normalclr

\begin{figure}[t]
\begin{center}
\includegraphics[height=1.25in,width=1.80in]{./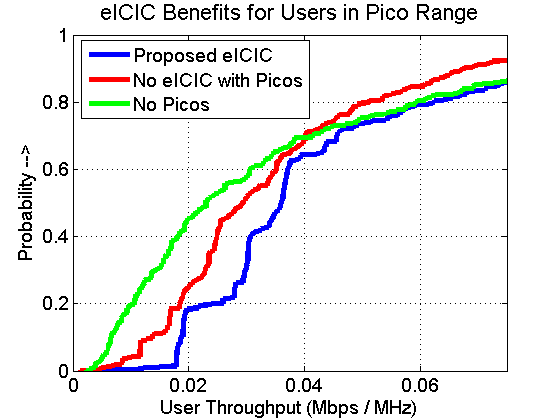}
\hspace{-0.20in}
\includegraphics[height=1.25in,width=1.80in]{./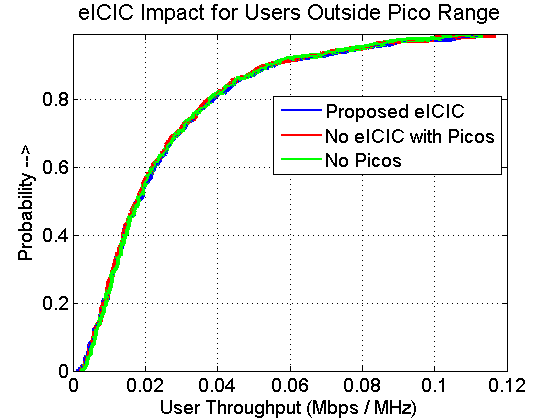}
\end{center}
\vspace{-0.15in}
\caption{\label{fig:compcdf}
CDF of UE throughputs with out proposed eICIC, no eICIC with picos, and no picos.
The plots are with pico transmit power of 4~W and dense urban macro density.}
\vspace{-0.15in}
\end{figure}

\changeclr
{\bf eICIC gains with power and load variation:} To better understand the eICIC
gains, in Figure~\ref{fig:gainpowden} we compare the gains of eICIC using our
algorithm with pico deployment without eICIC by varying the pico transmit powers and
macro UE densities. In the top panel, we show the percentage throughput gain of
eICIC scheme for different pico transmit powers for $5^{th}, 10^{th}, 50^{th}$
percentile of UE throughputs. It can be seen that, it is the edge UEs who really
gain with eICIC; indeed, this gain could even come at the expense of UEs close to
the pico (as can be seen with 1~W pico power scenario) who do not gain much due to
eICIC. The edge gain is also a direct consequence of our choice of log-utility
function as system utility. In the bottom panel of Figure~\ref{fig:gainpowden}, we
show the gains for different macro UE density. It can be see that, higher macro
UE density results in higher gain due to eICIC. Intuitively speaking, {\em more the
UEs that have the choice of associating with picos, larger are the eICIC gains from
our algorithm.} This suggests the usefulness of our scheme for practical scenarios
with large number of picos and very high density in the traffic hot-spot areas.
\normalclr

\begin{figure}[t]
\begin{center}
\includegraphics[height=1.25in,width=1.80in]{./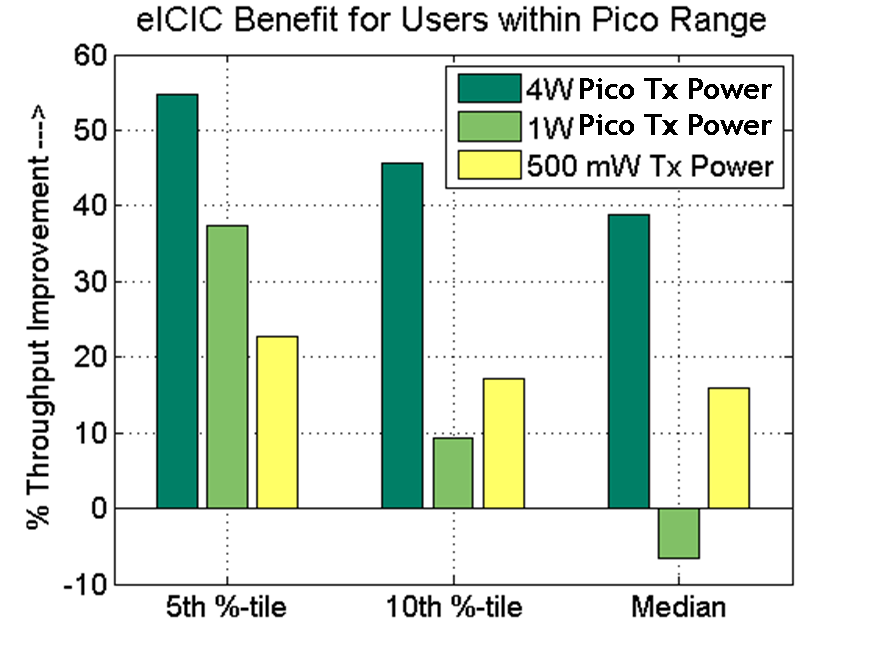}
\hspace{-0.20in}
\includegraphics[height=1.25in,width=1.80in]{./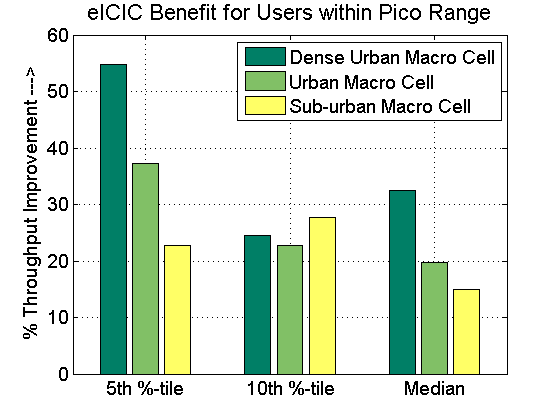}
\end{center}
\vspace{-0.15in}
\caption{\label{fig:gainpowden}
\changeclr
\%-Improvement in $5^{th}, 10^{th}, 50^{th}$ percentile of UE throughput with
eICIC as opposed to no eICIC based pico deployment for different pico transmits
powers (left-panel plot) and different macro cell densities (right-panel plot).
\normalclr}
\vspace{-0.15in}
\end{figure}

\renewcommand{\tabcolsep}{4pt}
\begin{table}[t]\footnotesize
    \caption{Representative ABS and CSB Values (using IC receivers)}
    \label{tab:abscsb}
\vspace{-0.2in}
\begin{center}
{\footnotesize
\begin{tabular}{| c ||l|l|l|l|l|l|l|l|l|l| }
\hline
Pico &1 &2 &3 &4 &5 &6 &7 &8& 9& 10 \\
Index &&&&&&&&&& \\
\hline
ABS &6 &4 &5 &5 &4 &4 &3 &4 &7 &7\\ 
(out &&&&&&&&&& \\
of 40) &&&&&&&&&& \\
\hline
CSB &13.7 &8.4 &13.4 &13.7 &7 &9.4 &13.9 &14 &7.4 &7.8\\ 
(dB) &&&&&&&&&& \\
\hline
\end{tabular}
}
\end{center}
\vspace{-0.20in}
\end{table}
\renewcommand{\tabcolsep}{6pt}

{\bf Optimal Parameters:} In Table~\ref{tab:abscsb}, we show the optimal ABS received
by each pico and the associated bias obtained using our algorithm. There are a couple
of interesting observations. First, the ABS offered to picos not only depends on
traffic load but also depends on the number of interferers. For example, Pico-5
received 4 out 40 subframes for ABS though it has a hotspot around it, however,
Pico-1 received 6 out of 40 subframes for ABS without any hotspot around it. This is
because, Pico-5 has more neighbors in the macro-pico interference graph. Second,
Pico-2 which is embedded into a macro, also receives 4~ABS subframes and serves as
enhancing in-cell throughput. Thus picos can go beyond improving throughput in edges
if eICIC parameters are configured in a suitable manner. This also shows that there
could be considerable variation in optimal ABS and CSB settings. This explains
the poor performance of network wide fixed eICIC schemes.

\vspace{-0.15in}
\changeclr
\section{SON and eICIC: Challenges and Discussion}
\label{sec:son}

%In wireless networks significant spatio-temporal traffic load fluctuations can be
%observed during the day. Network data on these mobility patterns indicate that
%individual UEs follow regular routes with occasional deviations as intuitively
%understood from social human behaviors. This regularity can be observed with
%aggregate cell data over the course of the day and as such cell load together with
%SINR (or rate) statistics  are required to be constantly  monitored. 

A key aspect of LTE networks is its {\em Self Optimized Networking} (SON) capability. Thus, it
is imperative to establish a SON based approach to eICIC parameter
configuration of an LTE network. The main algorithmic computations of SON
may be implemented in an centralized or a distributed manner. In the
centralized computation, the intelligence is concentrated at the Operations Support
System (OSS) layer of the network, while in the distributed computation the
computation happens in the RAN or eNB. The main benefit of a centralized approach
over a distributed approach is twofold: a centralized solution in OSS is capable of
working across base stations from different vendors as is typically the case, and
well-engineered centralized solutions do not suffer from convergence issues of
distributed schemes (due to asynchrony and message latency). Indeed, realizing these
benefits, some operators have already started deploying centralized SON for their cellular networks.
Nevertheless, both centralized and distributed approaches
have their merits and demerits depending on the use-case.  Also, it is widely
accepted that, even if the key algorithmic computations happen centrally in OSS, an
overall hybrid architecture (where most heavy-duty computations happen centrally in
OSS with distributed monitoring assistance from RAN) is best suited for  complicated SON
use-cases such as eICIC whereas complete distributed approach is suited for simple use-cases
like cell-neighbor detection. We next discuss our prototype hybrid SON for eICIC.

\vspace{-0.15in}
\subsection{Hybrid SON Architecture}

The architecture of our prototype is shown in Figure~\ref{fig:impl}. Apart from an operational 
wireless network, the architecture has two major component-blocks:  a network planning tool and an engine for
computing optimal eICIC configuration.

In our prototype the main optimization task is executed centrally at
the OSS level,  but input distributions are provided by the RAN and further
estimated/processed in the OSS. From the perspective of the optimization algorithm,
centralization offers the best possible globally optimal solution given accurate and
timely data inputs. We also assume that other affiliated SON procedures, such as
Automatic Neighbor Relations (ANR) are executed in the RAN and their results reported
to the OSS. After the execution of the OSS optimization algorithms, the optimal
parameters propagate southbound towards the RAN elements.

{\bf Monitoring in the RAN:} 
As part of the Operations Administration and Maintenance (OA\&M) interfaces,
Performance Management (PM) data are reported to all OSS applications,
SON-applications including. In general, we can have periodic reporting or event-based
reporting, an implementation choice that trades latency and accuracy. Irrespective of
the implementation, eICIC requires from the RAN, path loss statistics, traffic load
statistics and SINR statistics. Our prototype implementation can flexibly accommodate
the case of  missing data i.e. incomplete statistics, that is common during the
planning phase of the network or in the case where the required inputs are not
readily available by another vendor's RAN implementation.  In these cases we can
easily replace actual network data with synthetic data generated by a radio network
planning tool. 

{\bf The role of radio network planning (RNP) tools:} Various Databases are used to
import information necessary for performing the radio network planning. Inventory
information that provides network topology, drive-tests that calibrate path loss
models as well as performance measurement data that determine the shape and value of
traffic intensity polygons in the area of interest, are the most important
information sources aggregated in the RNP tool. RNP can use this information to
generate synthetic input data (using built-in simulation capabilities of the tool)
for our eICIC algorithm. For our purpose, we used a planning tool~\cite{9955tool}
that uses the traffic map, propagation map, and eNodeB locations to generate multiple
snapshots of UE locations.  This input data is saved into a database.  The eICIC
computation engine implements our proposed algorithm to compute optimal eICIC
parameters.  The role of the radio network planning tool as a prior distribution
generator is quite important in the planning as well as the initialization procedures
of a network element when measurements are unavailable or too sparse.    

\begin{figure}[t]
\begin{center}
\includegraphics[height=1.6in,width=2.8in]{./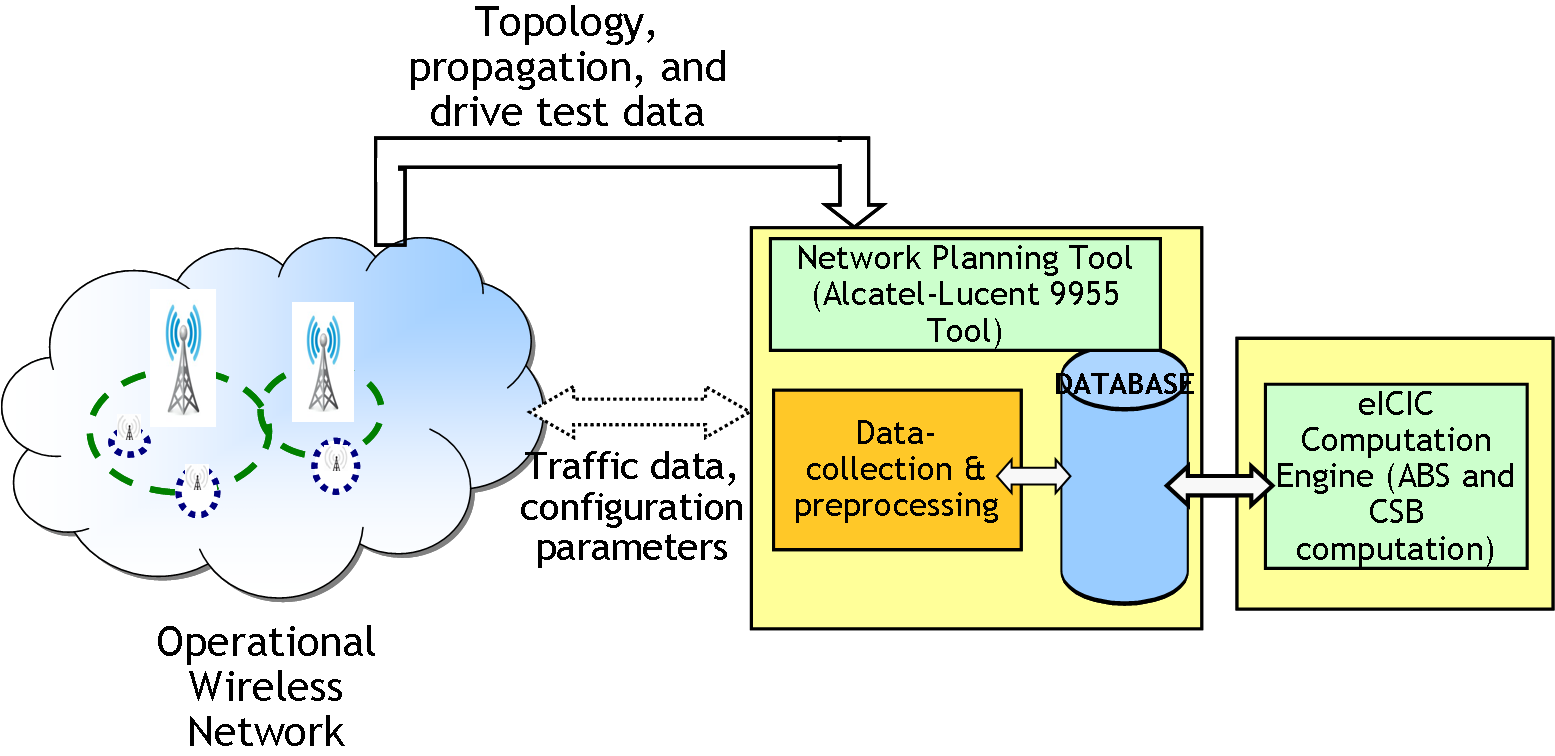}
\caption{\label{fig:impl}
Prototype Architecture.}
\end{center}
\vspace{-0.33in}
\end{figure}

\vspace{-0.15in}
\subsection{Computational Flow}  

{\bf Average Load Based Input:}
The traffic data that is available from the real network comprises of average traffic load information 
over a period of interest. Our method can be used for the purpose of obtaining the correct ABS and CSB
configuration can be used as follows.

We consider a scenario where, for each cell, average traffic load and SINR distribution
(in different subframes including ABS subframes) are reported periodically. Indeed,
such reporting is common in many real network deployments~\cite{ltehandbook}.  
We adapt our solution as follows:

\begin{enumerate}

\item {\em Generation of sample UE location:} Based on the traffic map and SINR
distribution, multiple system wide sample UE-location snapshots are generated
(commercial network planning tools usually have this capability). Each sample
UE-locations are translated into downlink PHY-layer rate between UE and Macro, UE and
Pico with and without ABS.  Note that, the exact location of a sample UE is of little
relevance, rather, the SINR to the nearest macro and pico along with the RSRP values
are of relevance here.

\item {\em Optimization for each snapshot:} Based on the UE-location snapshots, for
each such sample snapshot, our algorithm is run for solving OPT-ABS. This generates
ABS and CSB configuration for each sample.

\item {\em Monte-Carlo averaging:} Once ABS and CSB computation is performed for
sufficient sample UE location snapshots, the results are averaged over all samples. 

\end{enumerate}

{\bf Time-scale and dynamic eICIC:} There are two important considerations for the
reconfiguration frequency of eICIC parameters. Firstly, since only the traffic
distribution statistics can be obtained from the network, it is imperative that eICIC
computations happen at the same time-scale at which the traffic distribution can be
estimated accurately; otherwise, eICIC changes at a faster time-scale may 
not provide appreciable gains while causing unnecessary reconfiguration overhead.
Secondly, since eICIC reconfiguration involves a cluster of
macros and picos, it takes a few
minutes to have new set of traffic information from all the cells~\cite{ltehandbook}.
Therefore, the time scales of changing eICIC parameters for all practical purposes
are in the order of few minutes (typically, $5\--15$ minutes). Thus, eICIC
reconfiguration ought to happen whenever {\em (i)} traffic load changes significantly
in some cells, or {\em (ii)} or when a maximum duration elapses since the last
reconfiguration. In addition, if the estimated improvement upon new eICIC re-computation 
is small, then the previous eICIC configuration can be maintained.

% As we discussed the need for executing SON
%workflows periodically or based on events is based on real networks observations that
%exhibit significant fluctuations of traffic intensity temporally and spatially. These
%workflows are triggered by OSS applications that determine the need to reconfigure
%the RAN based on PM measurements. 

\normalclr
  
\vspace{-0.15in}
\subsection{Fully Distributed SON Architecture}
\label{sec:dist}

Another implementation option is to distribute eICC computations at the Network Element
(eNB) level and evidently what is traded off with this approach, as compared to
centralized, is signaling and communication latency. \changeclr In LTE, the X2 interface can be used
to interconnect eNBs and this interface has been the subject of extensive standardization
when it comes to interference management. The main challenge in distributed approach
comes from X2 latency and asynchrony. \normalclr
%eICIC messages that relay the utilization, ABS
%patters requested and used by each cell can be transmitted over X2. The signaling is
%limited though to these messages and there is no possibility of transmitting pricing
%information (crossreference the optimization section) over such interface. As such the
%distributed (dual) versions of the eICIC algorithms cannot be directly supported and only
%heuristics can be found that provide a performance-oriented proof. We present such
%heuristic in Section~\ref{sec:otherschemes}.

Using proprietary messages over X2, our solution can be adapted for in-network
computation where the macro and picos use their local computational resources to
obtain a desired solution. We provide a broad outline in the following. The key to
achieving this is our dual based implementation of ABS-RELAX.
Suppose the optimization task has to be performed periodically or over a time-window
of interest. Then, a distributed implementation involves the following high-level
steps. {\em (i)} Macros and picos generate UE samples in the cells based on
traffic/SINR distribution of UEs in the time-window of interest; the sample data is
exchanged between neighboring macro-pico pairs. {\em (ii)}
Algorithm~\ref{algo:abs-relax} is run in a distributed manner, where, in each
iteration, macros and picos exchange dual variables $\mu_{p,m}$ and $\lambda_u$'s
for relevant UEs. This exchange allows macros and picos to update primal variables
(for themselves and also candidate UEs) locally. {\em (iii)} Finally, note that the
rounding step and UE association in Algorithm~\ref{algo:abs-round} can be performed
locally. If UE association is implemented via cell-selection bias, the scheme
described in Section~\ref{sec:csb} can be easily carried out locally at each pico
(this may require collecting UE association vector by message exchange with
neighboring macros).

\vspace{-0.12in}
\section{Concluding Remarks}

In this work, we have developed algorithms for optimal configuration of eICIC
parameters based on actual network topology, propagation data, traffic load etc. Our
results using a real RF plan demonstrates the huge gains that can be had using
such a joint optimization of ABS and UE-association based on real network data. The
broader implication of our work is that, to get the best out of wireless networks,
networks must be optimized based on real network data.

\vspace{-0.12in}

 \bibliographystyle{plain}
 \bibliography{myref}

\end{document}